\renewcommand\@biblabel[1]{$^{#1}$}
\renewcommand{\@seccntformat}[1]{\csname the#1\endcsname.\hspace{0.5em}}
\renewenvironment{abstract}{%
  \begin{center}%
    {\bfseries \abstractname\vspace{-.5em}\vspace{\z@}}%
  \end{center}%
  \quotation}
\newcommand{\lyxaddress}[1]{
\par {\raggedright #1
\vspace{1.4em}
\noindent\par}
}
\newcommand{\sH}{\mathscr{H}}
\newcommand{\sL}{\mathscr{L}}
\newcommand{\sI}{\mathscr{I}}
\newcommand{\sF}{\mathscr{F}}
\newcommand{\sD}{\mathscr{D}}
\newcommand{\sB}{\mathscr{B}}
\newcommand{\R}{\mathbb{R}}
\newcommand{\KK}{\mathcal{K}}
\newcommand{\fg}{\mathfrak{g}}
\newcommand{\fh}{\mathfrak{h}}
\newtheorem{thm}{Theorem}
\newtheorem{prop}[thm]{Proposition}
\newtheorem{lem}[thm]{Lemma}
\newtheorem{cor}[thm]{Corollary}
\theoremstyle{remark}
\newtheorem{rem}[thm]{Remark}
\newtheorem*{rem*}{Remark}
\newtheorem*{rems*}{Remarks}
\theoremstyle{definition}
\newtheorem{define}[thm]{Definition}
\newtheorem*{def*}{Definition}
\newcommand{\Ran}{\mathop\mathrm{Ran}\nolimits}
\newcommand{\supp}{\mathop\mathrm{supp}\nolimits}
\newcommand{\Tr}{\mathop\mathrm{Tr}\nolimits}
\newcommand{\id}{\mathop\mathrm{id}\nolimits}
\newcommand{\aall}{a.a.\ }
\newcommand{\aevry}{a.e.\ }
\begin{document}

\title{\vspace{-1\baselineskip}
  Generalized Bloch analysis and propagators on Riemannian manifolds
  with a discrete symmetry}

\author{P.~Koc\'abov\'a, P.~\v{S}\v{t}ov\'\i\v{c}ek}
\date{}

\maketitle

\lyxaddress{Department of Mathematics, Faculty of Nuclear Science,
  Czech Technical University, Trojanova 13, 120 00 Prague, Czech
  Republic}

\begin{abstract}
  \noindent
  We consider an invariant quantum Hamiltonian $H=-\Delta_{LB}+V$ in
  the $L^{2}$ space based on a Riemannian manifold $\tilde{M}$ with a
  countable discrete symmetry group $\Gamma$. Typically, $\tilde{M}$
  is the universal covering space of a multiply connected Riemannian
  manifold $M$ and $\Gamma$ is the fundamental group of $M$. On the
  one hand, following the basic step of the Bloch analysis, one
  decomposes the $L^{2}$ space over $\tilde{M}$ into a direct integral
  of Hilbert spaces formed by equivariant functions on $\tilde{M}$.
  The Hamiltonian $H$ decomposes correspondingly, with each component
  $H_{\Lambda}$ being defined by a quasi-periodic boundary condition.
  The quasi-periodic boundary conditions are in turn determined by
  irreducible unitary representations $\Lambda$ of $\Gamma$. On the
  other hand, fixing a quasi-periodic boundary condition (i.e., a
  unitary representation $\Lambda$ of $\Gamma$) one can express the
  corresponding propagator in terms of the propagator associated to
  the Hamiltonian $H$. We discuss these procedures in detail and show
  that in a sense they are mutually inverse.
\end{abstract}

\newpage

\section{Introduction}

It has been demonstrated in Ref.~\citen{aharonovbohm} that in the
formalism of quantum mechanics vector potentials are more significant
than in classical mechanics. Geometrically this observation consists
of the fact that over a connected but multiply connected manifold $M$
there exist non-equivalent vector bundles with connections (covariant
derivatives) whose curvatures are equal. Here we are going to focus
exclusively on the case when the curvature vanishes. It is not
difficult to see that all flat Hermitian connections on $M$, up to
equivalence, are in one-to-one correspondence with unitary
representations of the fundamental group $\Gamma=\pi_{1}(M)$. Given a
flat Hermitian connection the corresponding unitary representation of
$\Gamma$ is defined by the parallel transport along closed paths with
a fixed base point. In the physical terminology, the parallel
transport along a closed path produces nothing but the nonintegrable
phase factor \cite{wuyang,horvathy}. Conversely, denote by $\tilde{M}$
the universal covering space of $M$. Then $\tilde{M}\to M$ is a
principal fiber bundle with the structure group $\Gamma$, and since
$\dim\tilde{M}=\dim M$ there exists exactly one connection in this
bundle which is necessarily flat. Given a unitary representation
$\Lambda$ of $\Gamma$ one can associate to this principal fiber bundle
a Hermitian vector bundle with a connection which is again flat
\cite{kobayashinomizu}.

On the physical level this means that there may exist, depending on
$\pi_{1}(M)$, non-equivalent quantum Hamiltonians describing a
particle moving on $M$ in a vanishing gauge field. In the physical
literature there was described a construction of the propagator
associated to such a Hamiltonian. The construction is based on the
notion of the Feynman path integral and assumes the knowledge of the
free propagator on the universal covering space $\tilde{M}$. The
propagator on $M$ connecting points $x$ and $y$ is then constructed as
a weighted sum running over all homotopy classes of paths from $x$ to
$y$, the summands are expressed in terms of the free propagator on
$\tilde{M}$ and the weights depend on a given representation $\Lambda$
of $\Gamma$. In Ref.~\citen{schulman0} one treats the case of a
circle, and in Ref.~\citen{schulman1} the example of the Aharonov-Bohm
effect is worked out. All this material can be also found in
Ref.~\citen{schulman2}.  Generalizations to non-Abelian gauge groups
are discussed in Refs.~\citen{sundrumtassie,ohetal}. Let us also note
that in the mathematical literature an analogues formula is known to
be valid, under certain assumptions, for heat
kernels\cite{atiyah,sunada2}.

Though the formula for the propagator is derived in
Ref.~\citen{schulman1} rather formally it turns out to be quite
effective even when considering more complicated topologies. As a
distinguished example of this kind one may point out two-dimensional
quantum systems describing the Aharonov-Bohm effect with two
solenoids, possibly with an additional scalar potential
\cite{mashkevichetal}. The propagator for such a system is expressed
in an explicit manner as an infinite series in Ref.~\citen{pla89}, and
more details on the method and computations are provided in
Ref.~\citen{ksexample}. The same example with a spin is discussed in
Ref.~\citen{geylerps}. Moreover, this formula makes it possible to
treat the scattering problem in this model as well.
\hskip -0.2em
$^{\!\textrm{\scriptsize\citen{pla91,pra93},\,\citen{duke94}}}$
\hskip 0.2em
The scattering amplitude has been also derived in
Ref.~\citen{itotamura} by completely different technical tools, see
also Refs.~\citen{itotamura2,tamura}. As one can verify by a
straightforward computation, this is an encouraging observation that
both methods finally lead to the same result (compare formula (12) in
Ref.~\citen{pra93} to Theorem~1.1 in Ref.~\citen{itotamura}\,). Let us
note that some other two-dimensional systems with similar features
have been discussed in the literature recently
\cite{hannaythain,giraudetal}.

The formula for the propagator on multiply connected spaces, as
derived in Ref.~\citen{schulman1} in the framework of the Feynman path
integral, is the central topic of the current paper. We shall call it
loosely the Schulman's ansatz. Our goal is to find a mathematically
rigorous interpretation of this formula. In a more general setting, we
shall consider the situation when $\tilde{M}$ is a manifold with a
discrete symmetry group $\Gamma$ and $M=\tilde{M}/\Gamma$. Thus
$\tilde{M}$ is a covering space of $M$ though not necessarily
universal. Treating this problem we realized quickly that one has to
consider the construction also from the opposite side. By the
Schulman's ansatz one relates to the free propagator on $\tilde{M}$
and to any unitary representation $\Lambda$ of $\Gamma$ a propagator
corresponding to a Hamiltonian on $M$. Conversely, one may ask how to
reconstruct the free propagator on $\tilde{M}$ from the knowledge of
the family of propagators on $M$ when $\Lambda$ runs over all unitary
irreducible representations (up to equivalence) of $\Gamma$. The
inverse procedure leads to a generalization of the Bloch decomposition
which we are going to discuss as well.

A generalization of the Bloch analysis has been proposed by Sunada
\cite{sunada}. This method has been used systematically to reveal the
band structure of spectrum for a $\Gamma$-periodic elliptic operator
on a non-compact Riemannian manifold $\tilde{M}$ with a discrete
symmetry group $\Gamma$ under the assumption that the quotient
$\tilde{M}/\Gamma$ is compact. For recent progress in this direction
see also Refs.~\citen{lledopostI,lledopostII}. Further, in
Ref.~\citen{aschetal} one considers magnetic Hamiltonians on the
two-dimensional torus in the case when the magnetic field is
determined by an integral two-form (or, in other words, when the
magnetic flux trough 2-cycles is quantized in accordance with the
famous Dirac's rule for magnetic monopoles). It is shown there that
the Bochner Laplacian over the universal covering of the torus (the
plane) decomposes into a direct integral whose components are all
equivalence classes of Bochner Laplacians over the torus with the
prescribed magnetic field. This result is then extended in
Ref.~\citen{gruber1} to more general compact Riemannian manifolds $M$
whose fundamental group $\Gamma$ is Abelian. Again one constructs a
direct integral over the Pontryagin dual $\hat{\Gamma}$. A more
algebraic approach in the spirit of the Gelfand's representation
theorem is developed in Ref.~\citen{gruber2}. Given a discrete
symmetry group $\Gamma$ one does not work directly with the dual space
$\hat{\Gamma}$ but instead with a $C^{\ast}$-algebra of continuous
functions on that space. Thus the basic topic in Ref.~\citen{gruber2}
is the spectral analysis of elliptic operators on Hilbert
$C^{\ast}$-modules over noncommutative $C^{\ast}$-algebras. The
symmetry group $\Gamma$ can be noncommutative, its action is assumed
to be co-compact.

In the current paper we recall the Bloch decomposition in a form which
is rather close to that due to Sunada \cite{sunada}. For a more
detailed discussion see Remark~\ref{rem:comparison} below in the text.
Let us stress however that our motivation is quite different than are
those in the above cited papers and thus we do not aim at all at the
spectral analysis of elliptic operators in question. For our purposes
we need only to construct a decomposition of a Hamiltonian on the
covering space $\tilde{M}$ into a direct integral over the dual space
$\hat{\Gamma}$ whose components are Hamiltonians based on the manifold
$M=\tilde{M}/\Gamma$.

Let us point out some other features of our approach. Firstly, here we
are interested only in the case when the gauge field vanishes, and
thus, in a suitable formalism, we deal just with Laplace-Beltrami
operators $\Delta_{LB}$. Furthermore, rather than working with $L^{2}$
spaces of sections in vector bundles over $M$ we prefer to work with
spaces of equivariant functions on the covering space $\tilde{M}$
since this way one avoids the explicit use of vector potentials.
Secondly, we do not require the manifold $M$ to be compact. For this
more general setting we have to distinguish the Friedrichs extension
as the preferred self-adjoint extension of a semibounded symmetric
operator $-\Delta_{LB}+V$ defined on smooth functions with compact
supports. Finally, we restrict ourselves to type I discrete groups
$\Gamma$ since for these groups the generalized harmonic analysis is
well established \cite{shtern}. Unfortunately, with this restriction,
some interesting physical models are not covered. Nevertheless one may
hope that further generalizations of our approach are possible since
the generalized Fourier analysis has been developed also for some
other groups which are not included among the type I groups.

The paper is organized as follows. In Section~2 we introduce some
basic notation and notions, and we specify more precisely the goals of
the paper. Several facts concerning the generalized Fourier analysis
that are important for our approach are recalled in Section~3. In
Section~4 we discuss the generalized Bloch decomposition. Further we
concentrate on the propagator which is regarded as the generalized
kernel of the unitary evolution operator. Thus, to deal with the
propagator, we apply the Schwartz kernel theorem. In addition, we have
to adjust this theorem to our purposes. This is done in Section~5.
Finally, Section~6 is devoted to the desired interpretation of the
Schulman's ansatz.

\section{Formulation of the problem}

We consider a smooth connected Riemannian manifold $\tilde{M}$ (which
is supposed to be Hausdorff and second countable). Denote by
$\tilde{\mu}$ the measure on $\tilde{M}$ induced by the Riemannian
metric. All $L^{p}$ spaces based on $\tilde{M}$ will be understood
with this measure.  Furthermore, there is given an at most countable
discrete group $\Gamma$ acting on $\tilde{M}$ as a symmetry group,
i.e., the Riemannian metric is $\Gamma$-invariant. The action is
assumed to be smooth, free and proper (or, by another frequently used
terminology, properly discontinuous). Let us recall that under these
assumptions any element $s\in\Gamma$ different from the unity has no
fixed points on $\tilde{M}$, and for any compact set
$K\subset\tilde{M}$, the intersection $K\cap s\cdot K$ is nonempty
only for finitely many elements $s\in\Gamma$. This also implies that
any point $y\in\tilde{M}$ has a neighborhood $U$ such that the sets
$s\cdot U$, $s\in\Gamma$, are mutually disjoint
\cite[Corollary~12.10]{lee}.

The quotient $M=\tilde{M}/\Gamma$ is again a connected Riemannian
manifold \cite[Proposition~4.1.23]{abrahammarsden}. Let $\mu$ denote
the induced measure on $M$. Again, all $L^{p}$ spaces based on $M$ are
understood with this measure. Thus we get a principal fiber bundle
$\pi:\tilde{M}\to{}M$ with the structure group $\Gamma$.

In some applications the following example is of interest. One starts
from a connected Riemannian manifold $M$. Let $\tilde{M}$ be the
universal covering space of $M$ and $\Gamma=\pi_{1}(M)$ be the
fundamental group of $M$. Then $\Gamma$ is at most countable
\cite[Theorem 8.11]{lee}, $\Gamma$ acts on $\tilde{M}$ smoothly,
freely and properly \cite[Chapter~21]{kosniowski}, and one can
naturally identify $M$ with $\tilde{M}/\Gamma$ (see also
Proposition~5.9 in Ref.~\citen{kobayashinomizu}).

Let us denote by $L_{s}$ the left action of $s\in\Gamma$ on
$\tilde{M}$, i.e., $L_{s}(y)=s\cdot y$ for $y\in\tilde{M}$. Given a
unitary representation $\Lambda$ of $\Gamma$ on a separable Hilbert
space $\sL_{\Lambda}$ one constructs the Hilbert space $\sH_{\Lambda}$
formed by $\Lambda$-equivariant vector-valued functions on $\tilde{M}$
(more precisely, by their equivalence classes modulo measure zero). In
more detail, any function $\psi\in\sH_{\Lambda}$ is measurable with
values in $\sL_{\Lambda}$ and
\[
\forall s\in\Gamma,\,
L_{s}^{\ast}\psi=\Lambda(s)\psi.
\]
Furthermore, the norm of $\psi$ induced by the scalar product, as
introduced below, must be finite.  Here, as usual,
$f^{*}:\mathrm{Fun}(N)\to\mathrm{Fun}(M)$ is the pull-back mapping
associated to $f:M\to N$ and some appropriate function spaces
$\mathrm{Fun}(M)$, $\mathrm{Fun}(N)$ based on the sets $M$ and $N$,
respectively. In order to keep the notation simple the same symbol,
$f^{\ast}$, will be used independently of the concrete nature of
functional spaces in question (they may be formed, for example, by
smooth functions or square integrable functions). The scalar product
of $\psi_{1},\psi_{2}\in\sH_{\Lambda}$ is defined by
\begin{equation}
  \langle\psi_{1},\psi_{2}\rangle
  = \int_{M}\langle\psi_{1}(y),\psi_{2}(y)\rangle\,d\mu(x).
  \label{eq:sprod_psi12}
\end{equation}

\begin{rem}
  \label{rem:int_f_invar}
  In (\ref{eq:sprod_psi12}) and everywhere in what follows we use the
  following convention. If $f$ is a measurable function on $\tilde{M}$
  such that $f$ is constant on the fibers of $\pi$ (equivalently, $f$
  is $\Gamma$-invariant) then $f=\pi^{\ast}g$ for some, essentially
  unique, measurable function $g$ on $M$. If $g\in L^{1}(M)$ then by
  the integral
  \[
  \int_{M}f(y)\, d\mu(x)
  \]
  we mean $\int_{M}g\, d\mu$. Notice that, with this convention, if
  $f\in L^{1}(\tilde{M})$ is arbitrary then
  \begin{equation}
    \int_{\tilde{M}}f\,\mathrm{d}\tilde{\mu}
    = \int_{M}\sum_{s\in\Gamma}L_{s}^{*}f(y)\,\mathrm{d}\mu(x).
    \label{eq:int_Mtild_M}
  \end{equation}
  One can easily see that the sum $\sum_{s\in\Gamma}L_{s}^{*}f(y)$
  converges absolutely almost everywhere on $\tilde{M}$.
\end{rem}

Let $\Delta_{LB}$ be the Laplace-Beltrami operator on $\tilde{M}$. As
a differential operator on $C_{0}^{\infty}(\tilde{M})$, $\Delta_{LB}$
is unambiguously determined by the equality
\begin{displaymath}
  \forall\varphi_1,\varphi_2\in C_{0}^{\infty}(\tilde{M}),\quad
  \langle\varphi_1,-\Delta_{LB}\varphi_2\rangle
  = \int_{\tilde{M}}\tilde{\mathfrak{g}}
  (\mathrm{d}\varphi_1,\mathrm{d}\varphi_2)\,\mathrm{d}\tilde{\mu}
\end{displaymath}
where the scalar product on the LHS is understood in $L^2(\tilde{M})$
and $\tilde{\mathfrak{g}}$ is the Riemannian metric defined on the
cotangent spaces on $\tilde{M}$. Suppose further that there is given a
measurable $\Gamma$-invariant bounded real function $V(y)$ on
$\tilde{M}$. Then one can introduce with the aid of the Friedrichs
extension the Hamiltonian $H=-\Delta_{LB}+V$ as a selfadjoint operator
on $L^{2}(\tilde{M})$. Actually, the differential operator
$-\Delta_{LB}+V$ is symmetric and bounded below on the domain
$C_{0}^{\infty}(\tilde{M})$. Moreover, the invariance of the
Riemannian metric and the invariance of $V$ imply that $H$ is a
$\Gamma$-invariant operator (i.e., $H$ commutes with all
$L_{s}^{\ast}$, $s\in\Gamma$).

To the same differential operator, $-\Delta_{LB}+V$, one can relate a
selfadjoint operator $H_{\Lambda}$ on the space $\sH_{\Lambda}$ for
any unitary representation $\Lambda$ of $\Gamma$. First one constructs
a linear subspace in $\sH_{\Lambda}$ formed by smooth vector-valued
functions. Let us define
\begin{equation}
  \Phi_{\Lambda} := \sum_{s\in\Gamma}L_{s}^{\ast}
  \otimes\Lambda(s^{-1}):C_{0}^{\infty}(\tilde{M})\otimes\sL_{\Lambda}
  \to\sH_{\Lambda}
  \label{eq:def_PhiL}
\end{equation}
If $\varphi\in C_{0}^{\infty}(\tilde{M})$, $v\in\sL_{\Lambda}$, then
\begin{equation}
  \left(\Phi_{\Lambda}\varphi\otimes v\right)(y)
  = \sum_{s\in\Gamma}\varphi(s\cdot y)\,\Lambda(s^{-1})v,
  \label{eq:PhiL_phiv}
\end{equation}
and on any compact set $K\subset\tilde{M}$, only a finite number of
summands on the RHS of (\ref{eq:PhiL_phiv}) do not vanish (for the
action of $\Gamma$ is proper). Consequently, the vector-valued
function $\Phi_{\Lambda}\,\varphi\otimes{}v$ is smooth.

From definition (\ref{eq:def_PhiL}) one immediately finds that
\begin{eqnarray}
  \forall s\in\Gamma, & & L_{s}^{\ast}\circ\Phi_{\Lambda}
  = \Lambda(s)\Phi_{\Lambda},\label{eq:LsPhiL}\\
  \forall s\in\Gamma, & & \Phi_{\Lambda}\circ(L_{s}^{\ast}\otimes1)
  =\Phi_{\Lambda}\circ(1\otimes\Lambda(s)).
  \label{eq:PhiLLs}
\end{eqnarray}
Property (\ref{eq:LsPhiL}) implies that
$\Phi_{\Lambda}\,\varphi\otimes{}v$ is actually $\Lambda$-equivariant.
It is also not difficult to see from (\ref{eq:PhiL_phiv}) that the
norm of $\Phi_{\Lambda}\,\varphi\otimes{}v$ in the Hilbert space
$\sH_{\Lambda}$ is finite.

Furthermore, using (\ref{eq:int_Mtild_M}) and some simple
manipulations one finds that for any
$\varphi\in{}C_{0}^{\infty}(\tilde{M})$ and $v\in\sL_{\Lambda}$,
\begin{equation}
  \forall\psi\in\sH_{\Lambda},\,
  \left\langle \psi,\Phi_{\Lambda}\varphi
    \otimes v\right\rangle
  = \int_{\tilde{M}}\varphi(y)\langle\psi(y),v\rangle\,
  \mathrm{d}\tilde{\mu}(y).
  \label{eq:prod_psi_PhiL}
\end{equation}
From this one deduces that the range of $\Phi_{\Lambda}$ is dense in
$\sH_{\Lambda}$. As a particular case of (\ref{eq:prod_psi_PhiL}) we
have
\begin{equation}
  \left\langle \Phi_{\Lambda}\varphi_{1}
    \otimes v_{1},\Phi_{\Lambda}\varphi_{2}\otimes v_{2}\right\rangle
  = \sum_{s\in\Gamma}
  \left\langle\Lambda(s^{-1})v_{1},v_{2}\right\rangle
  \left\langle L_{s}^{*}\varphi_{1},\varphi_{2}\right\rangle.
  \label{eq:prod_PhiL_PhiL}
\end{equation}
Here the scalar product
$\left\langle{}L_{s}^{*}\varphi_{1},\varphi_{2}\right\rangle$ is
understood in $L^{2}(\tilde{M})$ and, once again, it is nonzero only
for finite number of elements $s\in\Gamma$ for the action of $\Gamma$
is proper.

The Laplace-Beltrami operator is well defined on
$\Ran(\Phi_{\Lambda})$. Since $\Delta_{LB}$ commutes with $L_{s}^{*}$,
$s\in\Gamma$, one has
\begin{equation}
  \Delta_{LB}\Phi_{\Lambda}[\varphi\otimes v]
  = \Phi_{\Lambda}[\Delta_{LB}\varphi\otimes v].
  \label{eq:DPhiL_PhiLD}
\end{equation}

\begin{lem}
  The differential operator $-\Delta_{LB}$ is positive
  on the domain $\Ran(\Phi_{\Lambda})\subset\sH_{\Lambda}$.\\
\end{lem}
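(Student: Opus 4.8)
The plan is to pull everything back to the covering space $\tilde{M}$, integrate by parts there, and then push the resulting positive quadratic form down to $M$. Fix $\psi\in\Ran(\Phi_{\Lambda})$ and write $\psi=\Phi_{\Lambda}\Psi$ with $\Psi=\sum_{j}\varphi_{j}\otimes v_{j}\in C_{0}^{\infty}(\tilde{M})\otimes\sL_{\Lambda}$ a finite sum; recall that $\psi$ is then a smooth $\sL_{\Lambda}$-valued function on $\tilde{M}$. By (\ref{eq:DPhiL_PhiLD}), extended by linearity, $-\Delta_{LB}\psi=\Phi_{\Lambda}[(-\Delta_{LB}\otimes1)\Psi]$, which again lies in $\Ran(\Phi_{\Lambda})$, so the pairing $\langle\psi,-\Delta_{LB}\psi\rangle$ is a finite number. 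Applying (\ref{eq:prod_psi_PhiL}) (extended by linearity in the argument of $\Phi_{\Lambda}$, with $\psi$ itself playing the role of the equivariant function) gives
\[
\langle\psi,-\Delta_{LB}\psi\rangle
=\int_{\tilde{M}}\langle\psi(y),\big((-\Delta_{LB}\otimes1)\Psi\big)(y)\rangle_{\sL_{\Lambda}}\,\mathrm{d}\tilde{\mu}(y).
\]
Since the $\sL_{\Lambda}$-valued function $y\mapsto\Psi(y)=\sum_{j}\varphi_{j}(y)v_{j}$ is smooth with compact support, Green's first identity on $\tilde{M}$ (applied componentwise in an orthonormal basis of $\sL_{\Lambda}$; there are no boundary terms) turns the right-hand side into $\int_{\tilde{M}}\tilde{\mathfrak{g}}(\mathrm{d}\psi,\mathrm{d}\Psi)\,\mathrm{d}\tilde{\mu}$, where $\tilde{\mathfrak{g}}(\cdot,\cdot)$ denotes the pairing of $\sL_{\Lambda}$-valued $1$-forms obtained by contracting the form indices with the metric and taking the scalar product of the values.

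It remains to recognize this integral as a manifestly nonnegative quantity on $M$. Four observations do it. (i) The integrand $\tilde{\mathfrak{g}}(\mathrm{d}\psi,\mathrm{d}\Psi)$ has compact support, hence lies in $L^{1}(\tilde{M})$, so (\ref{eq:int_Mtild_M}) rewrites the integral as $\int_{M}\sum_{s\in\Gamma}L_{s}^{\ast}[\tilde{\mathfrak{g}}(\mathrm{d}\psi,\mathrm{d}\Psi)](y)\,\mathrm{d}\mu(x)$. (ii) The Riemannian metric is $\Gamma$-invariant, so $L_{s}^{\ast}[\tilde{\mathfrak{g}}(\mathrm{d}\psi,\mathrm{d}\Psi)]=\tilde{\mathfrak{g}}(L_{s}^{\ast}\mathrm{d}\psi,L_{s}^{\ast}\mathrm{d}\Psi)$. (iii) Differentiating the equivariance relation (\ref{eq:LsPhiL}) gives $L_{s}^{\ast}\mathrm{d}\psi=\Lambda(s)\,\mathrm{d}\psi$, and by unitarity of $\Lambda$ one may shift $\Lambda(s)$ onto the other slot, $\tilde{\mathfrak{g}}(\Lambda(s)\mathrm{d}\psi,L_{s}^{\ast}\mathrm{d}\Psi)=\tilde{\mathfrak{g}}(\mathrm{d}\psi,\Lambda(s^{-1})L_{s}^{\ast}\mathrm{d}\Psi)$. (iv) Since the action of $\Gamma$ is proper, the sum defining $\Phi_{\Lambda}\Psi$ is locally finite, so $\mathrm{d}$ commutes with it and $\sum_{s}\Lambda(s^{-1})L_{s}^{\ast}\mathrm{d}\Psi=\mathrm{d}\big(\sum_{s}\Lambda(s^{-1})L_{s}^{\ast}\Psi\big)=\mathrm{d}\psi$. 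Combining (i)--(iv) and pulling the locally finite sum inside the (continuous) form pairing,
\[
\langle\psi,-\Delta_{LB}\psi\rangle=\int_{\tilde{M}}\tilde{\mathfrak{g}}(\mathrm{d}\psi,\mathrm{d}\Psi)\,\mathrm{d}\tilde{\mu}=\int_{M}\tilde{\mathfrak{g}}(\mathrm{d}\psi,\mathrm{d}\psi)\,\mathrm{d}\mu\ \ge\ 0,
\]
because $\tilde{\mathfrak{g}}(\mathrm{d}\psi,\mathrm{d}\psi)$ is $\Gamma$-invariant and pointwise nonnegative. This proves the claim.

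The steps using (\ref{eq:DPhiL_PhiLD}), (\ref{eq:prod_psi_PhiL}) and Green's identity are routine; the point that requires care is the folding-back step, i.e.\ checking that the density $\tilde{\mathfrak{g}}(\mathrm{d}\psi,\mathrm{d}\Psi)$ periodizes exactly to $\tilde{\mathfrak{g}}(\mathrm{d}\psi,\mathrm{d}\psi)$. This is where both the unitarity of $\Lambda$ and the $\Gamma$-invariance of the metric enter essentially, and one must be mindful of the sesquilinearity convention and of the local finiteness (guaranteed by properness) that legitimizes interchanging $\mathrm{d}$ with the sum over $\Gamma$. An equivalent route, closer to the computations already in the text, is to start from $\langle\Phi_{\Lambda}\Psi,\Phi_{\Lambda}[(-\Delta_{LB}\otimes1)\Psi]\rangle$, expand it by (\ref{eq:prod_PhiL_PhiL}), integrate by parts in each $L^{2}(\tilde{M})$ pairing $\langle L_{s}^{\ast}\varphi_{j},-\Delta_{LB}\varphi_{k}\rangle$, and reassemble the resulting double sum as the squared norm of the equivariant $1$-form $\mathrm{d}\psi$; the bookkeeping is the same.
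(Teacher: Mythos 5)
Your argument is correct, and it takes a genuinely different route from the paper's. The paper also starts from (\ref{eq:prod_PhiL_PhiL}) and (\ref{eq:int_Mtild_M}) to reach the double sum (\ref{eq:prod_PhiL_PhiL2}), but it then establishes positivity by \emph{localizing on $M$}: it covers $\pi(\supp\psi)$ by finitely many trivializing open sets $U_i$ with sections $\sigma_i$, introduces a partition of unity $\eta_i$ and cutoffs $\xi_i$, and rewrites the form as a finite sum $\sum_i\mathfrak{s}_i(\phi_i,\phi_i)$ of manifestly positive local sesquilinear forms applied to $\phi_i=\sigma_i^{*}(\tilde{\xi}_i\psi)$; this avoids ever differentiating the equivariant vector-valued function $\psi$ globally and keeps all positivity checks at the level of scalar test functions on coordinate patches. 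You instead work globally upstairs: integrate by parts on $\tilde{M}$ against the compactly supported $\Psi$, then periodize and use equivariance, unitarity of $\Lambda$, and local finiteness of the sum over $\Gamma$ to recognize the resulting density as the pointwise nonnegative $\Gamma$-invariant function $\tilde{\mathfrak{g}}(\mathrm{d}\psi,\mathrm{d}\psi)$, arriving at the clean closed identity $\langle\psi,-\Delta_{LB}\psi\rangle=\int_M\tilde{\mathfrak{g}}(\mathrm{d}\psi,\mathrm{d}\psi)\,\mathrm{d}\mu(x)$. Your route buys a shorter, coordinate-free proof and an explicit formula for the quadratic form; its only delicate points are exactly the ones you flag (Green's identity with one factor non-compactly supported, the sesquilinearity convention when moving $\Lambda(s)$ across the pairing, and the interchange of $\mathrm{d}$ with the locally finite sum), all of which are handled correctly. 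The paper's route is heavier on bookkeeping but requires nothing beyond positivity of the metric on each chart and sidesteps any global interpretation of $\mathrm{d}\psi$.
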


\begin{proof}
  Denote by $\mathfrak{g}$ the Riemannian metric defined on the
  cotangent spaces on $M$. Recall that $\tilde{\mathfrak{g}}$ has a
  similar meaning on the Riemannian manifold $\tilde{M}$. Using
  (\ref{eq:prod_PhiL_PhiL}) and equality (\ref{eq:int_Mtild_M}) in
  Remark~\ref{rem:int_f_invar} one derives that
  \begin{eqnarray}
    & & \hspace{-2em}
    \left\langle \Phi_{\Lambda}\varphi_{1}
      \otimes v_{1},-\Delta_{LB}\Phi_{\Lambda}\varphi_{2}
      \otimes v_{2}\right\rangle
    \,=\, \sum_{s\in\Gamma}\langle v_{1},\Lambda(s^{-1})v_{2}\rangle
    \int_{\tilde{M}}\tilde{\mathfrak{g}}
    (\mathrm{d}\overline{\varphi_{1}},L_{s}^{*}\mathrm{d}
    \varphi_{2})\,\mathrm{d}\tilde{\mu}\nonumber
    \\
    & & \hspace{7em}
    =\, \int_{M}\sum_{t\in\Gamma}\sum_{s\in\Gamma}
    \langle\Lambda(t^{-1})v_{1},\Lambda(s^{-1})v_{2}\rangle\,
    \tilde{\mathfrak{g}}(L_{t}^{*}\mathrm{d}
    \overline{\varphi_{1}}(y),L_{s}^{*}\mathrm{d}\varphi_{2}(y))\,
    \mathrm{d}\mu(x).\nonumber \\
    & & \mbox{}
    \label{eq:prod_PhiL_PhiL2}
  \end{eqnarray}
  Choose $\psi\in\Ran\Phi_{\Lambda}$. Then $K=\pi(\supp\psi)$ is a
  compact subset of $M$. There exists a finite open covering of $K$,
  $K\subset\bigcup_{i=1}^{n}U_{i}$, such that the fiber bundle
  $\pi:\tilde{M}\to M$ is trivial over each $U_{i}$, i.e., there exist
  smooth sections $\sigma_{i}:U_{i}\to\tilde{M}$,
  $\pi\circ\sigma_{i}=\id_{U_{i}}$. Furthermore, for this covering
  there exists a partition of unity
  $\left\{\eta_{i}\right\}_{i=1}^{n}$ where
  $\eta_{i}\in{}C_{0}^{\infty}(U_{i})$, $0\leq\eta_{i}\leq1$, and
  $\sum\eta_{i}\equiv1$ on a neighborhood of $K$. Choose also
  auxiliary functions $\xi_{i}\in C_{0}^{\infty}(U_{i})$ so that
  $0\leq\xi_{i}\leq1$ and $\xi_{i}\equiv1$ on a neighborhood of
  $\supp\eta_{i}$. Set
  \[
  \tilde{\xi}_{i} = \pi^{*}\xi_{i}\in C^{\infty}(\pi^{-1}(U_{i})).
  \]
  The only reason of introducing the functions $\tilde{\xi}_{i}$ is to
  cope with the fact, in the formulas below, that the sections
  $\sigma_{i}$ are defined only locally. One simply uses the natural
  inclusion $C_{0}^{\infty}(U_{i})\subset C_{0}^{\infty}(M)$. If
  $\varphi_{1},\varphi_{2}\in C_{0}^{\infty}(\tilde{M})$ satisfy
  $\supp\varphi_{1},\supp\varphi_{2}\subset\pi^{-1}(K)$ then
  expression (\ref{eq:prod_PhiL_PhiL2}) equals
  \[
  \sum_{i=1}^{n}\sum_{t\in\Gamma}\sum_{s\in\Gamma}
  \langle\Lambda(t^{-1})v_{1},\Lambda(s^{-1})v_{2}\rangle
  \int_{M}\eta_{i}\,\mathfrak{g}(\sigma_{i}^{*}
  \mathrm{d}(\tilde{\xi}_{i}L_{t}^{*}
  \overline{\varphi_{1}}),\sigma_{i}^{*}\mathrm{d}
  (\tilde{\xi}_{i}L_{s}^{*}\varphi_{2}))\,\mathrm{d}\mu.
  \]
  Let $\mathfrak{s}_{i}$ be a positive sesquilinear form on
  $C_{0}^{\infty}(M)\otimes\sL_{\Lambda}$ defined by
  \[
  \mathfrak{s}_{i}(\zeta_{i}\otimes v_{1},\zeta_{2}
  \otimes v_{2}) = \langle v_{1},v_{2}\rangle
  \int_{M}\eta_{i}\,\mathfrak{g}
  (\mathrm{d}\overline{\zeta_{1}},\mathrm{d}\zeta_{2})\,\mathrm{d}\mu.
  \]
  With this notation we have
  \[
  \left\langle\Phi_{\Lambda}\varphi_{1}
    \otimes v_{1},-\Delta_{LB}\Phi_{\Lambda}\varphi_{2}
    \otimes v_{2}\right\rangle
  = \sum_{i=1}^{n}\mathfrak{s}_{i}(\sigma_{i}^{*}
  (\tilde{\xi}_{i}\Phi_{\Lambda}\varphi_{1}
  \otimes v_{1}),\sigma_{i}^{*}
  (\tilde{\xi}_{i}\Phi_{\Lambda}\varphi_{2}\otimes v_{2})).
  \]

  Thus we arrive at the following conclusion. Set
  $\phi_{i}=\sigma_{i}^{*}(\tilde{\xi}_{i}\psi)\in{}C_{0}^{\infty}(M)
  \otimes\sL_{\Lambda}$. Then
  \[
  \left\langle \psi,-\Delta_{LB}\psi\right\rangle
  = \sum_{i=1}^{n}\mathfrak{s}_{i}(\phi_{i},\phi_{i})\geq0.
  \]
  This completes the verification.
\end{proof}

Clearly, since the function $V(y)$ is $\Gamma$-invariant the
multiplication operator by $V$ is well defined in the Hilbert space
$\sH_{\Lambda}$. Now the definition of the Hamiltonian $H_{\Lambda}$
is straightforward. This is the Friedrichs extension of the
differential operator $-\Delta_{LB}+V$ considered on the domain
$\Ran\Phi_{\Lambda}$.

Let us denote by $U(t)=\exp(-itH)$, $t\in\mathbb{R}$, the evolution
operator in $L^{2}(\tilde{M})$. Similarly,
$U_{\Lambda}(t)=\exp(-itH_{\Lambda})$, $t\in\mathbb{R}$, is the
evolution operator in $\sH_{\Lambda}$ where $\Lambda$ is a unitary
representation of $\Gamma$. Denote by $\hat{\Gamma}$ the dual space to
$\Gamma$ (the quotient space of the space of irreducible unitary
representations of $\Gamma$). In the current paper we wish to address
the following two mutually complementary problems. First, to express
$U(t)$ in terms of $U_{\Lambda}(t)$, $\Lambda\in\hat{\Gamma}$.
Second, to express $U_{\Lambda}(t)$ in terms of $U(t)$ for a fixed
unitary representation $\Lambda$ of $\Gamma$. It turns out that
answers to both problems do exist. A solution to the former one is
provided by the generalized Bloch decomposition. A solution to the
latter problem is given by a formula known from the theoretical
physics (the Schulman's ansatz) \cite{schulman1,schulman2}.

\section{The generalized Fourier analysis}

The generalized harmonic analysis is well established for locally
compact groups of type~I \cite{shtern}. This is why we restrict
ourselves to the case when $\Gamma$ is a type~I group. Countable
discrete groups of type~I are well characterized \cite[Satz~6]{thoma}.

\begin{thm}[Thoma]
  A countable discrete group is type~I if and only if it has an
  Abelian normal subgroup of finite index.
\end{thm}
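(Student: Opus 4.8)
The statement is an equivalence, and I would treat its two implications quite differently.

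\textbf{Abelian-by-finite $\Rightarrow$ type~I.} Suppose $\Gamma$ has an abelian normal subgroup $A$ with $m:=[\Gamma:A]<\infty$ (this is equivalent to $\Gamma$ being virtually abelian, since an abelian subgroup of finite index has an abelian normal core of finite index). The cleanest route is Mackey's normal-subgroup analysis for $1\to A\to\Gamma\to Q\to 1$ with $Q:=\Gamma/A$ finite: $\widehat{A}$ is a compact abelian group on which $Q$ acts with finite, hence locally closed, orbits, so the action is smooth; for one character $\chi$ per orbit the Mackey obstruction lies in $H^{2}(Q_{\chi},\mathbb{T})$ with $Q_{\chi}\le Q$ finite, so the associated twisted group algebra is finite-dimensional and therefore type~I, and Mackey's theorem gives that $C^{*}(\Gamma)$ is GCR. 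Alternatively, and more cheaply, one can use the fact (Kaplansky) that in this situation every irreducible representation $\rho$ of $\Gamma$ satisfies $\dim\rho\le m$ — if $\chi$ is a character of $A$ occurring in $\rho|_{A}$ then by Frobenius reciprocity $\rho$ embeds in $\mathrm{Ind}_{A}^{\Gamma}\chi$, which has dimension $m$ — so $C^{*}(\Gamma)$ is subhomogeneous, in particular type~I. I expect this half to be essentially bookkeeping once the machinery is quoted.

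\textbf{Type~I $\Rightarrow$ abelian-by-finite.} Here I would prove the contrapositive and reduce it to a von~Neumann algebra statement. Two reductions are cheap. First, if $\Gamma$ is not amenable then $C^{*}(\Gamma)$ is not nuclear (for discrete groups nuclearity of the group $C^{*}$-algebra is equivalent to amenability), hence not type~I; so one may assume $\Gamma$ amenable. Second, if $\Gamma$ is type~I then $C^{*}(\Gamma)$ is GCR, and every representation of a GCR algebra generates a type~I von~Neumann algebra, so in particular the finite von~Neumann algebra $L(\Gamma)=\lambda(\Gamma)''$ from the regular representation is type~I. It therefore suffices to show that if $L(\Gamma)$ is a type~I von~Neumann algebra then $\Gamma$ has an abelian subgroup of finite index. (As a reassurance that one is on the right track: if $\Gamma$ had a nontrivial quotient $\bar{\Gamma}$ all of whose nontrivial conjugacy classes are infinite, then $L(\bar{\Gamma})$ would be a $\mathrm{II}_{1}$ factor, so $\Gamma$, whose quotients are again type~I, could not be type~I.)

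To exploit ``$L(\Gamma)$ type~I'' I would begin from its homogeneous central decomposition $L(\Gamma)\cong\bigoplus_{n\ge 1}B(H_{n})\,\bar\otimes\,L^{\infty}(X_{n})$ with $\dim H_{n}=n<\infty$ (no $n=\infty$ summand occurs because $L(\Gamma)$ carries a faithful normal trace): the central projections $p_{n}$ are fixed by every inner automorphism, so $g\mapsto\lambda(g)p_{n}$ is a unitary representation of $\Gamma$ that is fibrewise $n$-dimensional over $X_{n}$, and faithfulness of $\lambda$ makes the family $\{\lambda(\cdot)p_{n}\}_{n}$ separate the points of $\Gamma$. The remaining task is to extract from the existence of such a separating family of ``measurable fields of finite-dimensional representations'', together with the identity $\tau(\lambda(g))=\delta_{g,e}$, that $\Gamma$ is virtually abelian. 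This last step is the real obstacle: it is precisely here that one must rule out the groups that have bounded conjugacy classes but are not virtually abelian — the infinite ``extraspecial'' $2$-groups and, more generally, restricted direct products $\bigoplus_{i}F_{i}$ of finite non-abelian groups — for which a non-type-I (type $\mathrm{II}$ or $\mathrm{III}$) factor representation is produced by an infinite tensor product of finite-dimensional irreducibles against a suitable reference vector, in the Glimm/CAR spirit. Since a complete treatment of this extraction is lengthy and is exactly the content of Thoma's original argument, at this point I would simply cite Ref.~\citen{thoma} rather than reproduce it.
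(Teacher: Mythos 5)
The paper does not prove this theorem at all: it is quoted verbatim from Thoma's paper (Satz~6 of Ref.~\citen{thoma}), so there is no in-paper argument to compare yours against. Judged on its own terms, your write-up is an accurate map of how a proof would go, but it is not a proof of the stated equivalence, because the implication that carries all the weight --- type~I $\Rightarrow$ abelian-by-finite --- is in the end handed back to the very reference the paper cites. Your reductions for that direction are correct (type~I $\Rightarrow$ GCR $\Rightarrow$ every representation generates a type~I von Neumann algebra, so $L(\Gamma)$ is type~I and, carrying a faithful normal trace, decomposes into finite homogeneous summands), and you correctly locate the obstacle: passing from the existence of a point-separating family of measurable fields of finite-dimensional representations to the existence of an abelian subgroup of finite index, which is exactly where the infinite extraspecial-type examples must be excluded. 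But that passage \emph{is} the theorem; declining to carry it out and citing Thoma leaves the hard half unproved. (The amenability reduction via nuclearity is correct but does no work for you afterwards, and is anachronistic relative to Thoma's argument.)

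The forward direction you do essentially prove, and the Mackey route is sound: $Q=\Gamma/A$ is finite, so its orbits on the compact group $\hat{A}$ are closed with countably separated orbit space, the little-group obstructions live over finite groups and give finite-dimensional twisted group algebras, and Mackey's machine yields that $\Gamma$ is GCR. One caveat on your ``cheaper'' argument: for an infinite-dimensional irreducible unitary representation $\rho$ it is not automatic that $\rho|_{A}$ contains a character as a genuine subrepresentation --- the spectral measure of $\rho|_{A}$ on $\hat{A}$ could a priori be continuous, as it is for the regular representation of $\mathbb{Z}$. You need to observe that the class of this measure is invariant and ergodic under the finite group $Q$ (by normality of $A$ and irreducibility of $\rho$), hence concentrated on a single finite orbit and therefore atomic; only then does Frobenius reciprocity embed $\rho$ into the $m$-dimensional representation $\mathrm{Ind}_{A}^{\Gamma}\chi$ and give subhomogeneity. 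With that patch the forward implication is complete.
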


Unfortunately, there are multiply connected configuration spaces of
interest whose fundamental group is not of type~I. For example, the
configuration space for the two-dimensional model describing a charged
quantum particle moving in the magnetic field of $r$ Aharonov-Bohm
fluxes is a plane with $r$ excluded points. It is well known that
$\pi_{1}(\mathbb{R}^{2}\setminus\{ p_{1},\ldots,p_{r}\})$ is the free
group with $r$ generators. However, a freely generated group with two
and more generators is not of type~I. In this case, though, the
situation is not completely lost since a harmonic analysis has been
proved to exist on free groups as well \cite{talamanca}. But we do not
cover this example in the current paper.

The discrete group $\Gamma$ is understood to be equipped with the
counting measure. Let $\mathrm{d}\hat{m}$ be the Plancherel measure on
$\hat{\Gamma}$. Denote by
$\sI_{2}(\sL_{\Lambda})\equiv\sL_{\Lambda}^{\ast}\otimes\sL_{\Lambda}$
Hilbert space formed by Hilbert-Schmidt operators on $\sL_{\Lambda}$
(here $\sL_{\Lambda}^{\ast}$ is the dual space to $\sL_{\Lambda}$).
The Fourier transformation is constructed as a unitary mapping
\begin{equation}
  \label{eq:Fourier_transf}
  \sF:L^{2}(\Gamma)\to\int_{\hat{\Gamma}}^{\oplus}
  \sI_{2}(\sL_{\Lambda})\,\mathrm{d}\hat{m}(\Lambda).
\end{equation}
 Let us list its basic properties \cite{shtern}. If $f_{i}\in L^{2}(\Gamma)$,
$\hat{f}_{i}=\sF[f_{i}]$, $i=1,2$, then
 \[
\sum_{s\in\Gamma}\overline{f_{1}(s)}f_{2}(s)
= \int_{\hat{\Gamma}}\Tr[\hat{f_{1}}(\Lambda)^{\ast}
\hat{f_{2}}(\Lambda)]\,\mathrm{d}\hat{m}(\Lambda).
\]
Furthermore, for $f\in L^{1}(\Gamma)\subset L^{2}(\Gamma)$, one has
\[
\sF[f](\Lambda)=\sum_{s\in\Gamma}f(s)\Lambda(s).
\]
There exists an inversion formula: if $f$ is of the form $f=g\ast{}h$
(the convolution) where $g,h\in L^{1}(\Gamma)$, and $\hat{f}=\sF[f]$
then
\[
f(s)=\int_{\hat{\Gamma}}\Tr[\Lambda(s)^{\ast}\hat{f}(\Lambda)]\,
\mathrm{d}\hat{m}(\Lambda).
\]

Under our restrictions, one does not encounter any problems when
interpreting the above formulas. This is guaranteed in an obvious
manner by the following theorem \cite[Korollar~I]{thoma}.

\begin{thm}[Thoma]
  \label{thm:dimL}
  If $\Gamma$ is a countable discrete group of type~I then
  $\dim\sL_{\Lambda}$ is a bounded function of $\Lambda$ on the dual
  space $\hat{\Gamma}$.
\end{thm}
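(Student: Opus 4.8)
Since this statement is quoted from Thoma's work, my plan is to recover it from the first Thoma theorem already stated above: a countable discrete group $\Gamma$ of type~I contains an Abelian normal subgroup $A$ of finite index, say $n:=[\Gamma:A]$. I will in fact aim at the sharper assertion that $\dim\sL_\Lambda\le n$ for every $\Lambda\in\hat\Gamma$, which trivially implies boundedness.

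First I would analyze the restriction $\Lambda|_A$. Since $A$ is Abelian, its irreducible unitary representations are one-dimensional, i.e.\ characters, and by the spectral theorem $\Lambda|_A$ is described by a projection-valued measure $E$ on the (compact, metrizable) dual group $\hat A$. The conjugation action of $\Gamma$ on $A$ is trivial on $A$ itself, so the induced action on $\hat A$ factors through the finite group $\Gamma/A$; in particular every $\Gamma$-orbit in $\hat A$ is finite, of cardinality at most $n$. For any $\Gamma$-invariant Borel set $S\subseteq\hat A$ the projection $E(S)$ commutes with every $\Lambda(\gamma)$, so by irreducibility $E(S)\in\{0,I\}$. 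Passing to the (standard Borel) orbit space $\hat A/(\Gamma/A)$, the induced $\{0,I\}$-valued measure there is a point mass, hence $E$ is supported on a single $\Gamma$-orbit; being finite, this support consists of at most $n$ points. Thus $E$ is purely atomic, and $\Lambda|_A$ contains some character $\chi$ as a subrepresentation (and, incidentally, $\sL_\Lambda$ is finite-dimensional).

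Then I would pass back to $\Gamma$ by Frobenius reciprocity. As $A$ has finite index, induction and coinduction from $A$ to $\Gamma$ coincide, and one has $\mathrm{Hom}_\Gamma(\Lambda,\mathrm{Ind}_A^\Gamma\chi)\cong\mathrm{Hom}_A(\Lambda|_A,\chi)\ne 0$; since $\Lambda$ is irreducible it therefore embeds into $\mathrm{Ind}_A^\Gamma\chi$. The latter representation has dimension $[\Gamma:A]\cdot\dim\chi=n$, whence $\dim\sL_\Lambda\le n$. As $n$ depends only on $\Gamma$, this finishes the argument.

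The step I expect to be most delicate is the ergodicity argument showing that the spectral measure $E$ is concentrated on one finite orbit (one has to be careful that the orbit space is standard Borel, so that a two-valued countably additive measure on it is a Dirac mass). An alternative that avoids measure theory altogether is to invoke Clifford's theorem for the normal subgroup $A$ of finite index: $\Lambda|_A$ is then a finite direct sum $\bigoplus_{i=1}^{r} m\,\chi_i$ of characters with $\Gamma$ permuting the $\chi_i$ transitively and $r\le n$, which again exhibits a character subrepresentation of $\Lambda|_A$ and lets one conclude exactly as above. Everything after producing such a character subrepresentation is routine.
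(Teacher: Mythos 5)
The paper offers no proof of this statement at all: it is quoted from Thoma's paper as Korollar~I of Ref.~\citen{thoma}, just as the preceding characterization (Satz~6) is. So the only thing to assess is whether your self-contained derivation from that first Thoma theorem is sound --- and it is, and it even yields the explicit bound $\dim\sL_{\Lambda}\leq n$ with $n=[\Gamma:A]$. Each step checks out: the SNAG/spectral decomposition of $\Lambda|_{A}$ over the compact metrizable dual $\hat{A}$; the uniqueness of the spectral measure giving $\Lambda(\gamma)E(S)\Lambda(\gamma)^{-1}=E(\gamma\cdot S)$, so that $E(S)$ commutes with all of $\Lambda(\Gamma)$ when $S$ is invariant and Schur's lemma forces $E(S)\in\{0,I\}$; the triviality of the conjugation action of $A$ on $\hat{A}$, so orbits have at most $n$ points; the fact that a $\{0,1\}$-valued countably additive measure on the orbit space (compact metrizable Hausdorff for a finite group acting by homeomorphisms, hence standard Borel, with invariant Borel sets exactly the saturated ones) is a Dirac mass; and finite-index Frobenius reciprocity. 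Two remarks. First, the final step can be done more cheaply: once you have a unit vector $v$ with $\Lambda(a)v=\chi(a)v$ for all $a\in A$, the linear span of $\Lambda(\gamma_{1})v,\dots,\Lambda(\gamma_{n})v$ over coset representatives $\gamma_{i}$ is an invariant subspace that is finite-dimensional, hence closed, hence equal to $\sL_{\Lambda}$ by irreducibility; this avoids induced representations and any worry about unitary induction versus coinduction. Second, your proposed ``alternative avoiding measure theory'' via Clifford's theorem is circular as stated: the textbook Clifford theorem concerns finite-dimensional representations, and its unitary version for an a priori infinite-dimensional $\Lambda$ is precisely the spectral-measure/ergodicity argument you identify as the delicate step. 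That step is therefore not avoidable, but you have carried it out correctly.
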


Consequently, $\sI_{2}(\sL_{\Lambda})$ coincides with the space of all
linear operators on $\sL_{\Lambda}$, and the trace is well defined in
the usual sense. For example, let $\delta_{g}\in L^{2}(\Gamma)$,
$g\in\Gamma$, be defined by $\delta_{g}(s)=\delta_{g,s}$,
$\forall{}s\in\Gamma$. Then $\sF[\delta_{g}](\Lambda)=\Lambda(g)$ and
\[
\|\delta_{g}\|^{2}=\|\sF[\delta_{g}]\|^{2}
= \int_{\hat{\Gamma}}\Tr[\Lambda(g)^{\ast}\Lambda(g)]\,
\mathrm{d}\hat{m}(\Lambda)
= \int_{\hat{\Gamma}}\dim\sL_{\Lambda}\,\mathrm{d}\hat{m}(\Lambda).
\]
 Hence
\[
\int_{\hat{\Gamma}}\dim\sL_{\Lambda}\,\mathrm{d}\hat{m}(\Lambda)=1
\]
and
\begin{equation}
  \label{eq:mGamma_leq1}
  \hat{m}(\hat{\Gamma}) \leq 1.
\end{equation}

Finally, let us note that the Fourier transformation decomposes the
regular representation $\mathcal{R}$ of $\Gamma$ into a direct
integral of irreducible representations. The regular representation
acts on $L^{2}(\Gamma)$ as $\mathcal{R}_{s}=L_{s^{-1}}^{*}$,
$\forall{}s\in\Gamma$, and one has
\[
\forall s\in\Gamma,\,\,\sF\mathcal{R}_{s}\sF^{-1}
=\int_{\hat{\Gamma}}^{\oplus}1\otimes\Lambda(s)\,
\mathrm{d}\hat{m}(\Lambda)
\]
(with the identification
$\sI_{2}(\sL_{\Lambda})\equiv\sL_{\Lambda}^{\ast}\otimes\sL_{\Lambda}$).
This relation means nothing but
\begin{equation}
  \forall s\in\Gamma,\forall f\in L^{2}(\Gamma),\,\,
  \sF[L_{s}^{*}f](\Lambda) = \Lambda(s^{-1})\sF[f](\Lambda).
\label{eq:FL_LF}\end{equation}
In this context, of course, $L_{s}$ stands for the left action of
$\Gamma$ on itself.

\section{The generalized Bloch decomposition}

An application of the harmonic analysis on $\Gamma$ makes it possible
to carry out the first step in the Bloch analysis. This means a
decomposition of the Hilbert space $L^{2}(\tilde{M})$ into a direct
integral jointly with a corresponding decomposition of the Hamiltonian
$H$. Let us describe the procedure in detail. In the notation below,
the variable $y$ usually runs over $\tilde{M}$ while $x$ runs over
$M$. Recall also Remark~\ref{rem:int_f_invar} used repeatedly
throughout this section and, in particular, the meaning of the symbol
$\int_{M}f(y)\,\mathrm{d}\mu(x)$ for a $\Gamma$-invariant function $f$
on $\tilde{M}$.

For $f\in L^{2}(\tilde{M})$ and $y\in\tilde{M}$ set
\[
\forall s\in\Gamma,\, f_{y}(s)=f(s^{-1}\cdot y).
\]
Obviously,
\begin{equation}
  \forall s\in\Gamma,\, f_{s\cdot y}=L_{s^{-1}}^{*}f_{y}
  \label{eq:f_gy_Lgf}
\end{equation}
(here again, $L_s$ stands for the left action of $\Gamma$ on itself).
Thus the norm $\| f_{y}\|$ taken in $L^{2}(\Gamma)$ is a
$\Gamma$-invariant function of $y\in\tilde{M}$ and one easily finds
that
\[
\|f\|^{2} = \int_{M}\| f_{y}\|^{2}\,\mathrm{d}\mu(x).
\]
Hence for almost all $x\in M$ and all $y\in\pi^{-1}(\{ x\})$ one has
$f_{y}\in L^{2}(\Gamma)$. Observe that the tensor product
$\sL_{\Lambda}^{*}\otimes\sH_{\Lambda}$ can be naturally identified
with the Hilbert space of $1\otimes\Lambda$-equivariant
operator-valued functions on $\tilde{M}$ with values in
$\sL_{\Lambda}^{*}\otimes\sL_{\Lambda}\equiv\sI_{2}(\sL_{\Lambda})$.

\begin{define}
  The mapping
  \[
  \Phi:L^{2}(\tilde{M})\to\int_{\hat{\Gamma}}^{\oplus}
  \sL_{\Lambda}^{\ast}\otimes\sH_{\Lambda}\,\mathrm{d}\hat{m}(\Lambda)
  \]
  is defined so that for $f\in L^{2}(\tilde{M})$ and
  $\Lambda\in\hat{\Gamma}$, the component $\Phi[f](\Lambda)$ is a
  measurable operator-valued function on $\tilde{M}$,
  \begin{equation}
    \label{eq:defPhi}
    \Phi[f](\Lambda)\,(y)
    := \sF[f_{y}](\Lambda)\in\sI_{2}(\sL_{\Lambda}).
  \end{equation}
\end{define}

In particular, if $f\in L^{1}(\tilde{M})\cap L^{2}(\tilde{M})$ then
\[
\Phi[f](\Lambda)\,(y)=\sum_{s\in\Gamma}f(s^{-1}\cdot y)\Lambda(s).
\]
From here one can also deduce a simple relation between $\Phi$ and the
mappings $\Phi_{\Lambda}$, $\Lambda\in\hat{\Gamma}$, as introduced in
(\ref{eq:def_PhiL}). For $\varphi\in C_{0}^{\infty}(\tilde{M})$,
$v\in\sL_{\Lambda}$ and $y\in\tilde{M}$,
\begin{equation}
  \Phi[\varphi](\Lambda)(y)v
  = \left(\Phi_{\Lambda}\,\varphi\otimes v\right)\!(y).
  \label{eq:Phi_PhiL}
\end{equation}

\begin{prop}
  $\Phi$ is a well defined unitary mapping.
\end{prop}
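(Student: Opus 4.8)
The plan is to read off well-definedness and isometry directly from the Plancherel theorem for $\sF$, and to obtain surjectivity by factoring $\Phi$ as a composition of unitary maps. First, well-definedness: for $f\in L^{2}(\tilde M)$ it was observed above that $f_{y}\in L^{2}(\Gamma)$ for $\mu$-almost every $x\in M$ and every $y\in\pi^{-1}(\{x\})$, so $\Phi[f](\Lambda)(y)=\sF[f_{y}](\Lambda)\in\sI_{2}(\sL_{\Lambda})$ is defined for $\hat m$-almost every $\Lambda$. The requisite $1\otimes\Lambda$-equivariance of $y\mapsto\Phi[f](\Lambda)(y)$ is immediate from (\ref{eq:f_gy_Lgf}) and (\ref{eq:FL_LF}): for $s\in\Gamma$,
\[
\Phi[f](\Lambda)(s\cdot y)=\sF[f_{s\cdot y}](\Lambda)=\sF[L_{s^{-1}}^{*}f_{y}](\Lambda)=\Lambda(s)\,\sF[f_{y}](\Lambda)=\Lambda(s)\,\Phi[f](\Lambda)(y),
\]
which is exactly the equivariance built into the identification of $\sL_{\Lambda}^{*}\otimes\sH_{\Lambda}$ with $1\otimes\Lambda$-equivariant $\sI_{2}(\sL_{\Lambda})$-valued functions. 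Measurability of the field $\Lambda\mapsto\Phi[f](\Lambda)$ is routine: for $f\in L^{1}(\tilde M)\cap L^{2}(\tilde M)$ it is transparent from the locally finite series $\Phi[f](\Lambda)(y)=\sum_{s\in\Gamma}f(s^{-1}\cdot y)\Lambda(s)$, and the general case follows by $L^{2}$-approximation.

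Next, isometry. Combining the identity $\|f\|^{2}=\int_{M}\|f_{y}\|^{2}\,\mathrm d\mu(x)$ (with $\|f_{y}\|$ computed in $L^{2}(\Gamma)$, a $\Gamma$-invariant function of $y$) with the Plancherel formula $\|f_{y}\|^{2}=\int_{\hat\Gamma}\Tr[\sF[f_{y}](\Lambda)^{*}\sF[f_{y}](\Lambda)]\,\mathrm d\hat m(\Lambda)$ and Tonelli's theorem for the resulting nonnegative iterated integral, I get
\[
\|\Phi[f]\|^{2}=\int_{\hat\Gamma}\int_{M}\|\Phi[f](\Lambda)(y)\|_{\sI_{2}(\sL_{\Lambda})}^{2}\,\mathrm d\mu(x)\,\mathrm d\hat m(\Lambda)=\int_{M}\|f_{y}\|^{2}\,\mathrm d\mu(x)=\|f\|^{2}.
\]
Hence $\Phi$ is isometric, and in particular its range is closed.

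It remains to prove surjectivity, which is the crux. The map $f\mapsto(y\mapsto f_{y})$ identifies $L^{2}(\tilde M)$ isometrically with the Hilbert space of $L^{2}(\Gamma)$-valued $L^{2}$-fields $F$ over $M$ that are $\mathcal R$-equivariant, $F(s\cdot y)=\mathcal R_{s}F(y)$ with $\mathcal R_{s}=L_{s^{-1}}^{*}$; it is onto because any such $F$ arises from $f(y):=F(y)(e)$, for which one checks $f_{y}=F(y)$. Applying the unitary $\sF$ of (\ref{eq:Fourier_transf}) in the fibre over each $y$, and using (\ref{eq:FL_LF}) (equivalently, that $\sF$ carries $\mathcal R_{s}$ to $\int_{\hat\Gamma}^{\oplus}1\otimes\Lambda(s)\,\mathrm d\hat m(\Lambda)$), this space is carried unitarily onto the space of $\big(\int_{\hat\Gamma}^{\oplus}\sI_{2}(\sL_{\Lambda})\,\mathrm d\hat m(\Lambda)\big)$-valued $L^{2}$-fields over $M$ whose $\Lambda$-component is $1\otimes\Lambda$-equivariant in $y$. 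Finally, the Fubini isomorphism interchanging the iterated direct integrals $\int_{M}^{\oplus}\int_{\hat\Gamma}^{\oplus}$ and $\int_{\hat\Gamma}^{\oplus}\int_{M}^{\oplus}$ identifies the latter space with $\int_{\hat\Gamma}^{\oplus}\sL_{\Lambda}^{*}\otimes\sH_{\Lambda}\,\mathrm d\hat m(\Lambda)$, since $\sL_{\Lambda}^{*}\otimes\sH_{\Lambda}$ is by definition the space of $1\otimes\Lambda$-equivariant $\sI_{2}(\sL_{\Lambda})$-valued $L^{2}$-functions on $\tilde M$. Unravelling the definitions, the composition of these three unitaries is $\Phi$, so $\Phi$ is unitary. (Alternatively, with isometry already in hand one could show that $\Ran\Phi$ is dense using (\ref{eq:Phi_PhiL}) and the density of $\Ran\Phi_{\Lambda}$ in $\sH_{\Lambda}$, but passing from such a fibrewise statement to the full direct integral requires extra care with measurable fields, so the factorization above is cleaner.) I expect this surjectivity step to demand the most attention, specifically the bookkeeping for the two interchanged direct integrals---over $M$ with the convention of Remark~\ref{rem:int_f_invar} and over $\hat\Gamma$ with the Plancherel measure $\mathrm d\hat m$---and the verification that the Fubini isomorphism respects the fibrewise equivariance conditions on both sides; well-definedness and isometry are essentially forced by Plancherel once measurability is granted.
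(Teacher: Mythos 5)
Your proof is correct and follows essentially the same route as the paper: well-definedness and the Plancherel/Tonelli computation for the isometry are identical, and your surjectivity argument is just a structural repackaging of the paper's. The paper's version is precisely the element-wise form of your factorization: it sets $\check{\psi}(\cdot,y)=\sF^{-1}[\psi(\cdot)(y)]$, derives $\check{\psi}(s,y)=\check{\psi}(1,s^{-1}\cdot y)$ from the equivariance of $\psi(\Lambda)$ via (\ref{eq:FL_LF}) and the injectivity of $\sF$, and defines $f(y)=\check{\psi}(1,y)$ --- which is exactly your inverse $F\mapsto F(\cdot)(e)$ of the first factor in your composition of unitaries.
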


\begin{proof}
  (i) According to the above discussion, if $f\in L^{2}(\tilde{M})$
  then for \aall $y\in\tilde{M}$, $\sF[f_{y}](\Lambda)$ is well
  defined for \aall $\Lambda\in\hat{\Gamma}$. By the Fubini theorem,
  for \aall $\Lambda\in\hat{\Gamma}$, the vector-valued function
  $\Phi[f](\Lambda)$ is defined almost everywhere on $\tilde{M}$.
  Moreover, it immediately follows from (\ref{eq:defPhi}),
  (\ref{eq:f_gy_Lgf}) and (\ref{eq:FL_LF}) that $\Phi[f](\Lambda)$ is
  $1\otimes\Lambda$-equivariant.

  (ii) $\Phi$ is an isometry. Indeed (see the defining relation
  (\ref{eq:sprod_psi12}), Remark~\ref{rem:int_f_invar} and
  (\ref{eq:int_Mtild_M})),
  \begin{eqnarray*}
    \|\Phi[f]\|^{2} & = & \int_{\hat{\Gamma}}
    \left(\int_{M}\|{\Phi[f](\Lambda)(y)\|}^{2}\,\mathrm{d}\mu(x)\right)
    \mathrm{d}\hat{m}(\Lambda)\\
    & = & \int_{M}\left(\int_{\hat{\Gamma}}
      \|\sF[f_{y}](\Lambda)\|^{2}\,\mathrm{d}\hat{m}(\Lambda)\right)
    \mathrm{d}\mu(x)\\
    & = & \int_{M}\| f_{y}\|^{2}\,\mathrm{d}\mu(x)\, = \,\| f\|^{2}.
  \end{eqnarray*}

  (iii) $\Phi$ is surjective. Let
  $\psi\in\int_{\hat{\Gamma}}^{\oplus}\sL_{\Lambda}^{*}
  \otimes\sH_{\Lambda}\,\mathrm{d}\hat{m}(\Lambda)$.
  This implies that $\psi(\Lambda)$ is well defined for \aall
  $\Lambda\in\hat{\Gamma}$, and for such $\Lambda$ and for \aall
  $y\in\tilde{M}$, $\psi(\Lambda)(y)\in\sL_{\Lambda}^{*}
  \otimes\sL_{\Lambda}\equiv\sI_{2}(\sL_{\Lambda})$.
  By the Fubini theorem,
  \[
  \|\psi\|^{2}=\int_{M}\left(\int_{\hat{\Gamma}}
    \|\psi(\Lambda)(y)\|^{2}\,\mathrm{d}\hat{m}(\Lambda)\right)
  \mathrm{d}\mu(x).
  \]
  Hence for \aall $x\in M$ and all $y\in\pi^{-1}(\{ x\})$,
  $\psi(\cdot)(y)\in L^{2}(\hat{\Gamma})$.  Since the Fourier
  transform is surjective there exists an essentially unique
  measurable function $\check{\psi}(s,y)$ on $\Gamma\times\tilde{M}$
  such that for \aall $y\in\tilde{M}$,
  $\check{\psi}(\cdot,y)\in{}L^{2}(\Gamma)$ and
  $\sF[\check{\psi}(\cdot,y)](\Lambda)=\psi(\Lambda)(y)$. The
  $\sI_{2}(\sL_{\Lambda})$-valued function $\psi(\Lambda)$ is
  equivariant, i.e., $L_{s}^{*}\psi(\Lambda)=\Lambda(s)\psi(\Lambda)$,
  $\forall s\in\Gamma$. Recalling (\ref{eq:FL_LF}) we have
  \[
  \sF[\check{\psi}(\cdot,s\cdot y)](\Lambda)
  = \Lambda(s)\,\psi(\Lambda)(y)
  = \Lambda(s)\sF[\check{\psi}(\cdot,y)](\Lambda)
  = \sF[L_{s^{-1}}^{*}\check{\psi}(\cdot,y)](\Lambda).
  \]
  From the injectivity of the Fourier transform it follows that
  \[
  \forall s,r\in\Gamma,\textrm{~for~\aall }y\in\tilde{M},\quad
  \check{\psi}(r,s\cdot y)=\check{\psi}(s^{-1}r,y).
  \]
  In particular, letting $r=1$, we have
  \[
  \forall s\in\Gamma,\textrm{~for~\aall }y\in\tilde{M},\quad
  \check{\psi}(s,y)=\check{\psi}(1,s^{-1}\cdot y).
  \]
  Set $f(y)=\check{\psi}(1,y)$. Then one has
  $f_{y}(s)=\check{\psi}(s,y)$ and so $f_{y}\in L^{2}(\Gamma)$ for
  \aall $y\in\tilde{M}$. Moreover,
  $\sF[f_{y}](\Lambda)=\psi(\Lambda)(y)$. From here one easily
  concludes that $f\in L^{2}(\tilde{M})$ and $\Phi[f]=\psi$.
\end{proof}

\begin{prop}
  The decomposition
  \begin{equation}
    \label{eq:H_int_HL}
    \Phi H\Phi^{-1} = \int_{\hat{\Gamma}}^{\oplus}1
    \otimes H_{\Lambda}\,\mathrm{d}\hat{m}(\Lambda)
  \end{equation}
  holds true and, consequently,
  \begin{equation}
    \Phi U(t)\Phi^{-1} = \int_{\hat{\Gamma}}^{\oplus}1
    \otimes U_{\Lambda}(t)\,\mathrm{d}\hat{m}(\Lambda).
    \label{eq:U_int_UL}
  \end{equation}
\end{prop}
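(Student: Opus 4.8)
The strategy is to reduce the operator identity (\ref{eq:H_int_HL}) to the corresponding identity of quadratic forms, exploiting that both $H$ and every $H_\Lambda$ are Friedrichs extensions and hence each is determined by its form together with a form core: $C_0^\infty(\tilde M)$ for $H$, and $\Ran\Phi_\Lambda$ for $H_\Lambda$ (the latter being a form core by the very definition of the Friedrichs extension). Write $h$, $h_\Lambda$ for these forms and $Q(\cdot)$ for the form domain, let $a=\int_{\hat\Gamma}^\oplus 1\otimes h_\Lambda\,\mathrm d\hat m$ be the form of $A:=\int_{\hat\Gamma}^\oplus 1\otimes H_\Lambda\,\mathrm d\hat m$ and $b$ the form of $B:=\Phi H\Phi^{-1}$; then $Q(b)=\Phi(Q(h))$ and $b(\Phi u,\Phi v)=h(u,v)$. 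By the previous Lemma each $H_\Lambda$, and likewise $H$, is bounded below by $\inf V$, so fix $c$ with $H+c,H_\Lambda+c\geq1$. The claim $A=B$ is then equivalent to $Q(a)=Q(b)$ together with $a=b$ on the common domain.

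The easy half is to intertwine the differential operator over $C_0^\infty(\tilde M)$. For $\varphi\in C_0^\infty(\tilde M)$ and an orthonormal basis $\{e_j\}$ of $\sL_\Lambda$ — finite by Theorem~\ref{thm:dimL} — relation (\ref{eq:Phi_PhiL}) gives $\Phi[\varphi](\Lambda)=\sum_j e_j^\ast\otimes(\Phi_\Lambda\,\varphi\otimes e_j)$ with all summands in $\Ran\Phi_\Lambda\subseteq\Dom H_\Lambda$; using (\ref{eq:DPhiL_PhiLD}) and the $\Gamma$-invariance of $V$ one obtains fibrewise $(1\otimes H_\Lambda)\Phi[\varphi](\Lambda)=\Phi[(-\Delta_{LB}+V)\varphi](\Lambda)$, and then, by the unitarity of $\Phi$, that $\Phi[\varphi]\in\Dom A$ with $A\Phi[\varphi]=\Phi[(-\Delta_{LB}+V)\varphi]$. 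Hence $A$ is a semibounded self-adjoint extension of the symmetric operator $T':=\Phi\,\bigl((-\Delta_{LB}+V)|_{C_0^\infty(\tilde M)}\bigr)\,\Phi^{-1}$, and $a(\Phi\varphi,\Phi\varphi)=\langle\varphi,(-\Delta_{LB}+V)\varphi\rangle=h(\varphi,\varphi)$. Since $B$ is the Friedrichs extension of $T'$ (the unitary conjugate of the Friedrichs extension $H$), the maximality of the Friedrichs extension in the form order gives $Q(b)\subseteq Q(a)$; as $a=b$ on $\Phi[C_0^\infty(\tilde M)]$, which is a form core for $B$, and $a$ is $\|\cdot\|_{b+c}$-continuous on $Q(b)$, in fact $a=b$ on all of $Q(b)$.

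The hard half, and the real obstacle, is the reverse inclusion $Q(a)\subseteq Q(b)$: one must show that $\Phi$ maps the Friedrichs form domain of $H$ \emph{onto}, not merely into, the direct integral of the Friedrichs form domains of the $H_\Lambda$ — equivalently, that $\Phi[C_0^\infty(\tilde M)]$ is a form core for $A$. I would argue by orthogonality. If $u\in Q(a)$ is $(a+c)$-orthogonal to $\Phi[C_0^\infty(\tilde M)]$, then by the intertwining relation $\langle\Phi^{-1}u,(-\Delta_{LB}+V+c)\varphi\rangle=0$ for every $\varphi$, so $\Phi^{-1}u\in\ker(T^\ast+c)$, where $T=-\Delta_{LB}+V$ on $C_0^\infty(\tilde M)$. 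Now $T^\ast$ commutes with every $L_s^\ast$, and since $\Gamma$ is of type~I the Fourier analysis of Section~III realizes the von Neumann algebra generated by $\{L_s^\ast\}$ as a direct integral over $\hat\Gamma$ whose centre is the diagonalizable algebra $L^\infty(\hat\Gamma)$; hence $T^\ast$ commutes with $L^\infty(\hat\Gamma)$ and is decomposable, $\Phi T^\ast\Phi^{-1}=\int_{\hat\Gamma}^\oplus(T^\ast)_\Lambda\,\mathrm d\hat m$, with $(T^\ast)_\Lambda$ an extension of $1\otimes T_\Lambda^\ast$ ($T_\Lambda=-\Delta_{LB}+V$ on $\Ran\Phi_\Lambda$). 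Consequently, for a.e.\ $\Lambda$, the fibre $u(\Lambda)$ lies both in $\sL_\Lambda^\ast\otimes\ker(T_\Lambda^\ast+c)$ and, because $u\in Q(a)$, in $\sL_\Lambda^\ast\otimes Q(h_\Lambda)$. But $Q(h_\Lambda)\cap\ker(T_\Lambda^\ast+c)=\{0\}$: if $w$ belongs to this intersection, integrating by parts against the form core $\Ran\Phi_\Lambda$ and using $H_\Lambda+c\geq1$ gives $\|(H_\Lambda+c)^{1/2}w\|^2=0$. Hence $u(\Lambda)=0$ a.e., so $u=0$; thus $\Phi[C_0^\infty(\tilde M)]$ is $\|\cdot\|_{a+c}$-dense in $Q(a)$, and since $Q(b)$ (which carries the norm $\|\cdot\|_{b+c}=\|\cdot\|_{a+c}$ by the previous paragraph) is complete, hence $\|\cdot\|_{a+c}$-closed in $Q(a)$, we get $Q(a)=Q(b)$. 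This is (\ref{eq:H_int_HL}).

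It is worth stressing why the third step cannot be shortcut: $\tilde M$ is not assumed complete, so $-\Delta_{LB}+V$ need not be essentially self-adjoint on $C_0^\infty(\tilde M)$ and $\ker(T^\ast+c)$ may be nonzero; what makes the proof work is precisely the combination of the intrinsic description of the Friedrichs form domain (density of $\Ran\Phi_\Lambda$ in it) with the type~I structure of $\Gamma$, which turns the bare $\Gamma$-invariance of $T^\ast$ into genuine decomposability. Finally, (\ref{eq:U_int_UL}) is immediate from (\ref{eq:H_int_HL}) by functional calculus, which commutes both with conjugation by the unitary $\Phi$ and with the formation of direct integrals: $\Phi U(t)\Phi^{-1}=e^{-it\Phi H\Phi^{-1}}=\int_{\hat\Gamma}^\oplus 1\otimes e^{-itH_\Lambda}\,\mathrm d\hat m=\int_{\hat\Gamma}^\oplus 1\otimes U_\Lambda(t)\,\mathrm d\hat m$.
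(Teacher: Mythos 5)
Your proof follows the same skeleton as the paper's: intertwine the differential operator through $\Phi$ on the core $C_0^{\infty}(\tilde{M})$ using (\ref{eq:Phi_PhiL}) and (\ref{eq:DPhiL_PhiLD}), identify the quadratic forms on $C_0^{\infty}(\tilde{M})$ and on $\Ran\Phi_{\Lambda}$, and pass to the Friedrichs extensions; the treatment of $V$ as a bounded decomposable perturbation is the same. The difference is that the paper compresses the decisive step into the phrase ``closing the quadratic forms one finds that (\ref{eq:H0_int_H0L})'', whereas you correctly isolate what that phrase hides: the inclusion $Q(b)\subseteq Q(a)$ is automatic (a closed form extending a form extends its closure), but the reverse inclusion amounts to showing that $\Phi[C_0^{\infty}(\tilde{M})]$ is a \emph{form core} for the direct integral $\int_{\hat{\Gamma}}^{\oplus}1\otimes H_{\Lambda}\,\mathrm{d}\hat{m}(\Lambda)$, and this does not follow from the fibrewise density of $\Ran\Phi_{\Lambda}$ alone, because $\Phi[\varphi](\Lambda)$ is a very special element of $\sL_{\Lambda}^{*}\otimes\Ran\Phi_{\Lambda}$ tied rigidly across fibres. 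Your orthogonality argument is the right device for this, and the fibrewise uniqueness statement $Q(h_{\Lambda})\cap\ker(T_{\Lambda}^{*}+c)=\{0\}$ is exactly the standard characterization of the Friedrichs extension.

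One step, as written, points in the wrong direction. Since $T\subseteq\Phi^{-1}\bigl(\int^{\oplus}1\otimes\overline{T_{\Lambda}}\,\mathrm{d}\hat{m}\bigr)\Phi$, the adjoint $T^{*}$ is an \emph{extension} of $\int^{\oplus}1\otimes T_{\Lambda}^{*}\,\mathrm{d}\hat{m}$ -- which is what you state -- but from ``$(T^{*})_{\Lambda}$ extends $1\otimes T_{\Lambda}^{*}$'' you cannot conclude that $u(\Lambda)\in\sL_{\Lambda}^{*}\otimes\ker(T_{\Lambda}^{*}+c)$; you would need the opposite inclusion, and asserting that is essentially the density statement you are trying to prove. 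The repair is to avoid decomposing $T^{*}$ itself and instead use $\ker(T^{*}+c)=\Ran(T+c)^{\perp}$: the orthogonal projection onto $\overline{\Ran(T+c)}$ commutes with all $L_{s}^{*}$, hence (this is where type~I and the Plancherel/central decomposition enter, since the diagonal algebra $L^{\infty}(\hat{\Gamma})$ lies in the von Neumann algebra generated by the $L_{s}^{*}$) it is decomposable, with fibres of the form $\sL_{\Lambda}^{*}\otimes N_{\Lambda}$ by irreducibility of $\Lambda$ acting on the first factor; projecting $\Phi[(T+c)\varphi](\Lambda)=\sum_{j}e_{j}^{*}\otimes(T_{\Lambda}+c)(\Phi_{\Lambda}\varphi\otimes e_{j})$ onto the basis covectors shows $N_{\Lambda}\supseteq(T_{\Lambda}+c)\Ran\Phi_{\Lambda}$, whence $u(\Lambda)\in\sL_{\Lambda}^{*}\otimes\bigl((T_{\Lambda}+c)\Ran\Phi_{\Lambda}\bigr)^{\perp}=\sL_{\Lambda}^{*}\otimes\ker(T_{\Lambda}^{*}+c)$ a.e., and your argument then closes. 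With that correction the proof is complete and is in fact more informative than the one in the paper.
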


\begin{proof}
  Relation (\ref{eq:Phi_PhiL}) between $\Phi$ and $\Phi_{\Lambda}$ and
  equality (\ref{eq:DPhiL_PhiLD}) imply that
  \begin{equation}
    \forall\varphi\in C_{0}^{\infty}(\tilde{M}),\quad
    \Delta_{LB}\Phi[\varphi]=\Phi[\Delta_{LB}\varphi].
    \label{eq:DPhi_PhiD}
  \end{equation}
  Denote by $H^{0}$ the Friedrichs extension of the differential
  operator $-\Delta_{LB}$ with the domain $C_{0}^{\infty}(\tilde{M})$
  in the Hilbert space $L^{2}(\tilde{M})$. Similarly, let
  $H_{\Lambda}^{0}$ be the Friedrichs extension of the differential
  operator $-\Delta_{LB}$ with the domain $\Ran\Phi_{\Lambda}$ in the
  Hilbert space $\sH_{\Lambda}$.  Equality (\ref{eq:DPhi_PhiD})
  implies that for all
  $\varphi_{1},\varphi_{2}\in{}C_{0}^{\infty}(\tilde{M})$,
  \[
  \langle\varphi_{1},-\Delta_{LB}\varphi_{2}\rangle
  = \langle\Phi[\varphi_{1}],\Phi[-\Delta_{LB}\varphi_{2}]\rangle
  = \int_{\hat{\Gamma}}\langle\Phi[\varphi_{1}]
  (\Lambda),1\otimes(-\Delta_{LB})\Phi[\varphi_{2}]
  (\Lambda)\rangle\,\mathrm{d}\hat{m}(\Lambda).
  \]
  Closing the quadratic forms one finds that
  \begin{equation}
    \Phi H^{0}\Phi^{-1} = \int_{\hat{\Gamma}}^{\oplus}1
    \otimes H_{\Lambda}^{0}\,\mathrm{d}\hat{m}(\Lambda).
    \label{eq:H0_int_H0L}
  \end{equation}

  With some abuse of notation we denote by $V$ the multiplication
  operator by the function $V(y)$ in the Hilbert space
  $L^{2}(\tilde{M})$.  According to our assumptions, the function
  $V(y)$ is bounded and so $V$ is a bounded operator. Since the
  function $V(y)$ is $\Gamma$-invariant the corresponding
  multiplication operator can be introduced also in the Hilbert space
  $\sH_{\Lambda}$. In this case the operator will be denoted by the
  symbol $V_{\Lambda}$. The $\Gamma$-invariance of $V(y)$ implies that
  for any $f\in L^{2}(\tilde{M})$, $(Vf)_{y}=V(y)f_{y}$, whence
  \[
  \Phi[Vf](\Lambda)(y)=V(y)\Phi[f](\Lambda)(y).
  \]
  This equality means nothing but
  \begin{equation}
    \Phi V\Phi^{-1}
    = \int_{\hat{\Gamma}}^{\oplus}1\otimes V_{\Lambda}\,
    \mathrm{d}\hat{m}(\Lambda).
    \label{eq:V_int_VL}
  \end{equation}
  Since $H=H^{0}+V$ and $H_{\Lambda}=H_{\Lambda}^{0}+V_{\Lambda}$
  relation (\ref{eq:H_int_HL}) follows from (\ref{eq:H0_int_H0L}) and
  (\ref{eq:V_int_VL}).
\end{proof}

\begin{rem}
  \label{rem:comparison}
  Various forms of decomposition (\ref{eq:H_int_HL}) can be found in
  the literature. Let us make a short comparison to some previous
  works. The generalized Bloch theory has been proposed and used by
  Sunada and collaborators \cite{sunada,adachisunada,adachisunadasy}
  to investigate the character of spectra of $\Gamma$-periodic
  elliptic operators on $\tilde{M}$ under the assumption that the
  quotient $\tilde{M}/\Gamma$ is compact. In particular, in this case
  one can demonstrate the band structure of spectrum, see also
  Ref.~\citen{bruningsunsada} for an asymptotic estimate of the number
  of bands. These ideas have been also applied in
  Refs.~\citen{lledopostI,lledopostII} to construct coverings
  $\tilde{M}\to{}M=\tilde{M}/\Gamma$ such that $M$ is a compact
  Riemannian manifold, $\tilde{M}$ is non-compact and the
  Laplace-Beltrami operator $\Delta_{LB}$ on $\tilde{M}$ has at least
  a prescribed finite number of spectral gaps. The Bloch decomposition
  used in Refs.~\citen{sunada,lledopostII} basically coincides with
  that given in the current paper, only some details of presentation
  somewhat differ (for example, we avoid fixing a fundamental domain
  for the action of $\Gamma$ on $\tilde{M}$). In fact, the situation
  treated in these references is more abstract in the following sense.
  Instead of the Fourier transform (\ref{eq:Fourier_transf}) one can
  consider a unitary mapping
  \begin{displaymath}
    \sF:L^{2}(\Gamma)\to\int_{Z}^{\oplus}\sH(z)\,\mathrm{d}z
  \end{displaymath}
  such that the regular representation $\mathcal{R}_s=L_{s^{-1}}^{*}$
  of $\Gamma$ in $L^2(\Gamma)$ decomposes correspondingly,
  \begin{equation}
    \label{eq:FRsFinverse}
    \forall s\in\Gamma,\,\,\sF\mathcal{R}_s\sF^{-1}
    = \int_{Z}^{\oplus}\mathcal{R}_s(z)\,\mathrm{d}z.
  \end{equation}
  Here $(Z,\mathrm{d}z)$ is supposed to be a separable Hausdorff space
  with a regular Borel measure, and $\mathcal{R}_s(z)$, $z\in{}Z$, is
  a unitary operator in the Hilbert space $\sH(z)$. The unitary
  mapping
  \begin{displaymath}
    \Phi:L^{2}(\tilde{M})\to\int_{Z}^{\oplus}
    \sH_{\mathcal{R}(z)}\,\mathrm{d}z
  \end{displaymath}
  is again defined so that for $f\in L^{2}(\tilde{M})$ and $z\in{}Z$,
  the component $\Phi[f](z)$ is a measurable
  $\mathcal{R}(z)$-equivariant vector-valued function on $\tilde{M}$,
  \begin{displaymath}
    \Phi[f](z)\,(y) := \sF[f_{y}](z)\in\sH(z).
  \end{displaymath}
  Observe that for $h\in{}L^2(\Gamma)$,
  $h=\sum_{s\in\Gamma}h(s)\mathcal{R}_s\delta_e$ ($e\in\Gamma$ is the
  unit element). Furthermore, (\ref{eq:FRsFinverse}) means that
  $\forall{}s\in\Gamma$, $\forall{}h\in{}L^2(\Gamma)$,
  \begin{displaymath}
    \sF[\mathcal{R}_s h](z) = \mathcal{R}_s(z)\sF[h](z)
    \textrm{~~\aevry on~}Z.
  \end{displaymath}
  It follows that for $\varphi\in{}C_0^\infty(\tilde{M})$,
  \begin{equation}
    \label{eq:PhiSunada}
    \Phi[\varphi](z)\,(y) = \sum_{s\in\Gamma}\varphi(s^{-1}\cdot y)
    \mathcal{R}_s(z)\sF[\delta_e](z).
  \end{equation}
  This is the form of $\Phi$ used in Refs.~\citen{sunada,lledopostII}.
  From (\ref{eq:PhiSunada}) one deduces that $\Phi$ decomposes the
  Friedrichs extension of the Laplace-Beltrami operator $\Delta_{LB}$
  on $\tilde{M}$.

  In Ref.~\citen{gruber1} the group $\Gamma$ is supposed to be
  Abelian. On the other hand, one considers therein magnetic
  Hamiltonians with in general non-vanishing magnetic fields. In more
  detail, let $(L,\fh,\nabla)$ be a Hermitian line bundle with
  connection over $M$ and
  $(\tilde{L},\tilde{\fh},\tilde{\nabla})=\pi^\ast(L,\fh,\nabla)$
  where $\pi:\tilde{M}\to{}M$ is the projection. Denote by
  $C^\infty(\tilde{L})$ the vector space of smooth sections in
  $\tilde{L}$. If $\sigma\in{}C^\infty(\tilde{L})$ then
  $\tilde{\nabla}\sigma$ belongs to
  $C^\infty(T^\ast\tilde{M}\otimes\tilde{L})$. The Bochner Laplacian
  $\Delta_B$ as a differential operator acting on
  $C_0^\infty(\tilde{L})$ is unambiguously determined by the equality
  \begin{displaymath}
    \forall\sigma_1,\sigma_2\in C_{0}^{\infty}(\tilde{L}),\quad
    \int_{\tilde{M}}\tilde{\fh}(\sigma_1,-\Delta_B\sigma_2)
    \,\mathrm{d}\tilde{\mu}
    = \int_{\tilde{M}}\tilde{\fg}\otimes\tilde{\fh}
    (\tilde{\nabla}\sigma_1,\tilde{\nabla}\sigma_2)
    \,\mathrm{d}\tilde{\mu}.
  \end{displaymath}
  The magnetic Schr\"odinger operator $H^\nabla$ in the Hilbert space
  $L^2(\tilde{L})$ (the Hilbert space of square integrable sections in
  $\tilde{L}$) is the Friedrichs extension of $-\Delta_B$ with the
  domain $C_{0}^{\infty}(\tilde{L})$. The action of $\Gamma$ on
  $\tilde{M}$ lifts in a canonical way to an isometric linear action
  $\gamma$ of $\Gamma$ on $\tilde{L}$. For $\chi\in\tilde{\Gamma}$ let
  $\sH_{\tilde{L},\chi}$ be the Hilbert space o measurable sections in
  $\tilde{L}$ which satisfy
  \begin{displaymath}
    \forall s\in\Gamma,\quad
    \sigma(s\cdot y) = \chi(s)\,\gamma_s\big(\sigma(y)\big)
    \textrm{~~\aevry on~}\tilde{M}
  \end{displaymath}
  and have a finite norm (see Remark~\ref{rem:int_f_invar})
  \begin{displaymath}
    \int_M\tilde{\fh}(\sigma(y),\sigma(y))\,\mathrm{d}\mu(x)
    < \infty.
  \end{displaymath}
  In Ref.~\citen{gruber1} one constructs in a way very similar to that
  of the current paper a unitary mapping
  \begin{displaymath}
    \Phi:L^{2}(\tilde{L})\to\int_{\hat{\Gamma}}^{\oplus}
    \sH_{\tilde{L},\chi}\,\mathrm{d}\hat{m}(\chi)
  \end{displaymath}
  which decomposes $H^\nabla$.
\end{rem}

\section{The Schwartz kernel theorem}

Equality (\ref{eq:U_int_UL}) represents a way how to express the
evolution operator $U(t)$ in terms of $U_{\Lambda}(t)$,
$\Lambda\in\hat{\Gamma}$.  Our next task is to invert this
relationship. The final formula will concern kernels of operators
rather than directly the operators. Let us recall the fundamental
kernel theorem due to Schwartz (see, for example, Theorem~5.2.1 in
Ref.~\citen{hoermander}).

\begin{thm}[Schwartz]
  Let $X_{i}\subset\mathbb{R}^{n_{i}}$, $i=1,2$, open,
  $\mathcal{K}\in\sD'(X_{1}\times X_{2})$. Then by the equation
  \begin{equation}
    \forall\varphi_{1}\in C_{0}^{\infty}(X_{1}),
    \varphi_{2}\in C_{0}^{\infty}(X_{2}),\quad
    (K\varphi_{1})(\varphi_{2})
    = \mathcal{K}(\varphi_{1}\otimes\varphi_{2})
    \label{schwartz}
  \end{equation}
  there is defined a continuous linear map
  $K:C_{0}^{\infty}(X_{1})\to\sD'(X_{2})$. Conversely, to every such
  continuous linear map $K$ there is one and only one distribution
  $\KK$ such that (\ref{schwartz}) is valid. One calls $\KK$ the
  kernel of $K$.
\end{thm}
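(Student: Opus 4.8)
I would prove the two halves separately, since they have quite different flavours. The direct half is formal: given $\KK\in\sD'(X_1\times X_2)$ I set $(K\varphi_1)(\varphi_2):=\KK(\varphi_1\otimes\varphi_2)$ and observe that $\varphi_2\mapsto\varphi_1\otimes\varphi_2$ and $\varphi_1\mapsto\varphi_1\otimes\varphi_2$ are continuous maps between the relevant $C_0^\infty$ spaces; composing with the continuous functional $\KK$ shows that each $K\varphi_1$ lies in $\sD'(X_2)$ and that $\varphi_1\mapsto K\varphi_1$ is continuous into $\sD'(X_2)$ with its weak-$\ast$ topology, because for each fixed $\varphi_2$ the scalar map $\varphi_1\mapsto\KK(\varphi_1\otimes\varphi_2)$ is continuous on $C_0^\infty(X_1)$. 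Uniqueness of $\KK$ in the converse half is equally quick: finite sums of products $\varphi_1\otimes\varphi_2$ are dense in $C_0^\infty(X_1\times X_2)$ (localize by a partition of unity, then expand in a Fourier series on a torus containing the relevant compact set), and a distribution is determined by its restriction to a dense subspace.

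The real content is the existence of $\KK$ in the converse half. Given a continuous linear $K:C_0^\infty(X_1)\to\sD'(X_2)$, I would introduce the bilinear form $B(\varphi_1,\varphi_2):=(K\varphi_1)(\varphi_2)$. By hypothesis $B$ is continuous in $\varphi_1$ for fixed $\varphi_2$ (evaluation at $\varphi_2$ is continuous on $\sD'(X_2)$) and continuous in $\varphi_2$ for fixed $\varphi_1$ (since $K\varphi_1$ is a distribution). Everything is local, so I would fix relatively compact open sets $Y_i$ with $\overline{Y_i}\subset X_i$ and construct a kernel on each box $Y_1\times Y_2$; the local kernels agree on overlaps by the uniqueness above and patch together with a partition of unity. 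On the fixed compact supports, $C_0^\infty(\overline{Y_i})$ is a Fréchet space, so the separately continuous form $B$ is automatically jointly continuous, by a Banach--Steinhaus/Baire-category argument. Concretely this produces $C>0$ and integers $N_1,N_2$ with
\[
|B(\varphi_1,\varphi_2)|\ \le\ C\,\|\varphi_1\|_{C^{N_1}}\,\|\varphi_2\|_{C^{N_2}},\qquad \supp\varphi_i\subset\overline{Y_i}.
\]

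With this estimate in hand I would build the kernel by Fourier series. After rescaling I may assume $\overline{Y_i}$ sits in a fundamental domain of a torus $\mathbb{T}^{n_i}$; pick $\psi_i\in C_0^\infty(X_i)$ with $\psi_i\equiv1$ near $\overline{Y_i}$. For $\chi\in C_0^\infty(Y_1\times Y_2)$ expand $\chi=\sum_{k}c_k(\chi)\,e_k$ in the Fourier basis $e_k(x_1,x_2)=e^{i(k_1\cdot x_1+k_2\cdot x_2)}$, $k=(k_1,k_2)$, where the coefficients $c_k(\chi)$ decay faster than any power of $|k|$ with bounds controlled by the $C^M$-norms of $\chi$, and set
\[
\KK(\chi):=\sum_{k}c_k(\chi)\,B\!\left(\psi_1\,e^{ik_1\cdot},\ \psi_2\,e^{ik_2\cdot}\right).
\]
The displayed estimate bounds the $k$-th summand by $C'(1+|k|)^{N_1+N_2}\,|c_k(\chi)|$, so the series converges absolutely and $\chi\mapsto\KK(\chi)$ is a continuous linear functional, i.e.\ $\KK\in\sD'(Y_1\times Y_2)$. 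Taking $\chi=\varphi_1\otimes\varphi_2$ with $\supp\varphi_i\subset Y_i$, the Fourier series factorizes, joint continuity of $B$ lets me pull the sums out of $B$, and since $\psi_i\equiv1$ on $\supp\varphi_i$ the two bracketed arguments resum to $\varphi_1$ and $\varphi_2$; hence $\KK(\varphi_1\otimes\varphi_2)=B(\varphi_1,\varphi_2)$, as required. Routine checks show $\KK$ is independent of the auxiliary choices $\psi_i$ and of the torus, so the local kernels glue, and the continuity of $K\mapsto\KK$ asserted in the theorem is visible from the construction. (Alternatively one could quote nuclearity of $C_0^\infty$ together with $\sD(X_1\times X_2)=\sD(X_1)\,\hat{\otimes}\,\sD(X_2)$, but proving those ingredients comes down to essentially the same Fourier-type estimates.)

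The step I expect to be the genuine obstacle is the passage from separate to joint continuity of $B$: this is the one place where completeness — the Fréchet structure of $C_0^\infty$ with prescribed compact support, and hence a Baire-category input — is unavoidable, and it is what supplies the finite orders $N_1,N_2$ without which the Fourier-series sum could diverge. Once the uniform estimate on $B$ is secured, the remainder — the Fourier-series construction and the verification of consistency under the choices $\psi_i$, $Y_i$ — is bookkeeping with the explicit bounds.
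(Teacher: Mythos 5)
The paper does not prove this theorem at all: it is quoted verbatim as a classical result, with a pointer to Theorem~5.2.1 of H\"ormander's \emph{The Analysis of Linear Partial Differential Operators~I}, so there is no in-paper proof to compare against. Your sketch is a correct rendition of the standard argument and is in fact essentially the proof given in that cited reference (reduction to fixed compact supports, separate-to-joint continuity of the bilinear form via Baire category on the Fr\'echet spaces $\sD_{\overline{Y_i}}$, then construction of $\KK$ by a Fourier-series expansion on a torus and gluing by uniqueness), so nothing further is needed.
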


\begin{cor}
  To every $B\in\sB(L^{2}(\tilde{M}))$ there exists one and only one
  $\beta\in\sD'(\tilde{M}\times\tilde{M})$ such that
  \[
  \forall\varphi_{1},\varphi_{2}\in C_{0}^{\infty}(\tilde{M}),\quad
  \beta(\overline{\varphi_{1}}\otimes\varphi_{2})
  = \langle\varphi_{1},B\varphi_{2}\rangle.
  \]
  Moreover, the map $B\mapsto\beta$ is injective.
\end{cor}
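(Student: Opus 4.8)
The plan is to reduce the statement to the Euclidean Schwartz kernel theorem just quoted by localizing with an atlas and a partition of unity. Throughout, $\sD'(\tilde{M})$ and $\sD'(\tilde{M}\times\tilde{M})$ are to be read as the continuous duals of $C_0^\infty(\tilde{M})$ and $C_0^\infty(\tilde{M}\times\tilde{M})$ respectively, a locally integrable function $T$ being paired with test functions against the Riemannian density $\mathrm{d}\tilde{\mu}$ (resp.\ $\mathrm{d}\tilde{\mu}\otimes\mathrm{d}\tilde{\mu}$). Writing $\psi_{1}=\overline{\varphi_{1}}$, the identity to be established reads $\beta(\psi_{1}\otimes\varphi_{2})=\int_{\tilde{M}}\psi_{1}\,(B\varphi_{2})\,\mathrm{d}\tilde{\mu}$, so the relevant object is the \emph{bilinear} form $(\psi_{1},\varphi_{2})\mapsto\int_{\tilde{M}}\psi_{1}\,(B\varphi_{2})\,\mathrm{d}\tilde{\mu}$ on $C_0^\infty(\tilde{M})\times C_0^\infty(\tilde{M})$.

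First I would record the elementary continuity estimate $\bigl|\int_{\tilde{M}}\psi_{1}\,(B\varphi_{2})\,\mathrm{d}\tilde{\mu}\bigr|\le\|B\|\,\|\psi_{1}\|_{L^{2}}\|\varphi_{2}\|_{L^{2}}$. Since convergence in $C_0^\infty(\tilde{M})$ (with supports in a fixed compact set) entails convergence in $L^{2}(\tilde{M})$, this bilinear form is separately continuous, and its restriction to each step $C^\infty_{K}\times C^\infty_{K'}$ of the LF-structure is continuous; this is exactly the input a kernel theorem needs. Uniqueness of $\beta$ is then immediate, because the linear span of the tensor products $\psi_{1}\otimes\varphi_{2}$, $\psi_{1},\varphi_{2}\in C_0^\infty(\tilde{M})$, is dense in $C_0^\infty(\tilde{M}\times\tilde{M})$, and a distribution is determined by its restriction to a dense subspace.

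For existence I would fix a countable locally finite atlas $\{(V_{i},\kappa_{i})\}$ of $\tilde{M}$, with $\kappa_{i}\colon V_{i}\to X_{i}\subset\mathbb{R}^{n}$ a diffeomorphism onto an open set, a subordinate partition of unity $\{\chi_{i}\}$, $\chi_{i}\in C_0^\infty(V_{i})$, $\sum_{i}\chi_{i}\equiv1$, and auxiliary $\tilde{\chi}_{i}\in C_0^\infty(V_{i})$ equal to $1$ near $\supp\chi_{i}$; in the chart $\kappa_{i}$ the measure $\tilde{\mu}$ has a smooth positive density $\rho_{i}$. For each pair $(i,j)$ the prescription $(u,v)\mapsto\int_{\tilde{M}}(\chi_{i}\,\kappa_{i}^{\ast}u)\,\bigl(B(\chi_{j}\,\kappa_{j}^{\ast}v)\bigr)\,\mathrm{d}\tilde{\mu}$ is a separately continuous bilinear form on $C_0^\infty(X_{i})\times C_0^\infty(X_{j})$, hence equals $\KK_{ij}(u\otimes v)$ for a unique $\KK_{ij}\in\sD'(X_{i}\times X_{j})$ by the Schwartz theorem. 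Transporting $\KK_{ij}$ back by $\kappa_{i}\times\kappa_{j}$ and dividing out the density factors $\rho_{i},\rho_{j}$ gives $\beta_{ij}\in\sD'(V_{i}\times V_{j})\subset\sD'(\tilde{M}\times\tilde{M})$ with $\beta_{ij}(\psi_{1}\otimes\varphi_{2})=\int_{\tilde{M}}(\chi_{i}\psi_{1})\,\bigl(B(\chi_{j}\varphi_{2})\bigr)\,\mathrm{d}\tilde{\mu}$. As $\{V_{i}\times V_{j}\}$ is locally finite in $\tilde{M}\times\tilde{M}$, the series $\beta:=\sum_{i,j}\beta_{ij}$ converges in $\sD'(\tilde{M}\times\tilde{M})$; on $\psi_{1}\otimes\varphi_{2}$ only finitely many terms survive and, using $\sum_{i}\chi_{i}\equiv1$, one obtains $\beta(\psi_{1}\otimes\varphi_{2})=\int_{\tilde{M}}\psi_{1}\,(B\varphi_{2})\,\mathrm{d}\tilde{\mu}$, i.e.\ $\beta(\overline{\varphi_{1}}\otimes\varphi_{2})=\langle\varphi_{1},B\varphi_{2}\rangle$. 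Finally $B\mapsto\beta$ is visibly linear, and $\beta=0$ forces $\langle\varphi_{1},B\varphi_{2}\rangle=0$ for all $\varphi_{1},\varphi_{2}\in C_0^\infty(\tilde{M})$, whence $B=0$ by density of $C_0^\infty(\tilde{M})$ in $L^{2}(\tilde{M})$; this is the asserted injectivity.

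The step I expect to be the main obstacle is the globalization: arranging the partition-of-unity patching so that the local kernels $\beta_{ij}$ genuinely assemble into a single distribution on $\tilde{M}\times\tilde{M}$ that restricts correctly to each product chart, while keeping track of the density factors $\rho_{i}$ arising from writing $\tilde{\mu}$ in coordinates and of the complex conjugation absorbed into $\psi_{1}=\overline{\varphi_{1}}$. The continuity estimate, the density of the tensor-product span, and the injectivity argument are routine.
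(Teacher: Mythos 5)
Your proof is correct. The paper states this corollary without proof, and your argument --- the estimate $|\langle\varphi_{1},B\varphi_{2}\rangle|\le\|B\|\,\|\varphi_{1}\|\,\|\varphi_{2}\|$ giving continuity of $\varphi_{2}\mapsto\langle\,\cdot\,,B\varphi_{2}\rangle$ into $\sD'$, followed by chart-by-chart application of the Euclidean kernel theorem and patching with a partition of unity --- is exactly the standard route the authors implicitly rely on (the same $L^{2}$ continuity estimate appears explicitly, in the form (\ref{eq:norm_PhiL_phi_v}), in their proof of the operator-valued analogue that follows).
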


We call $\beta$ \emph{the kernel of $B$}.

This corollary of the kernel theorem can be also extended to Hilbert
spaces formed by equivariant vector-valued functions. Let us stress
that kernels in this case are operator-valued distributions.

\begin{thm}
  Let $\Lambda$ be an irreducible unitary representation of $\Gamma$
  in a Hilbert space $\sL_\Lambda$. To every $B\in\sB(\sH_{\Lambda})$
  there exists one and only one
  $\beta\in\sD'(\tilde{M}\times\tilde{M})\otimes\sB(\sL_{\Lambda})$
  such that
 \begin{eqnarray*}
   &  & \forall\varphi_{1},\varphi_{2}\in C_{0}^{\infty}(\tilde{M}),
   \forall v_{1},v_{2}\in\sL_{\Lambda},\\
   &  & \langle v_{1},\beta(\overline{\varphi_{1}}
   \otimes\varphi_{2})v_{2}\rangle
   = \langle\Phi_{\Lambda}\,\varphi_{1}
   \otimes v_{1},B\,\Phi_{\Lambda}\,\varphi_{2}
   \otimes v_{2}\rangle.
 \end{eqnarray*}
 The distribution $\beta$ is $\Lambda$-equivariant in the following
 sense
 \begin{equation}
   \forall s\in\Gamma,\quad\beta\circ(L_{s}\otimes1)
   = \Lambda(s)\beta,\,\,\beta\circ(1\otimes L_{s})
   = \beta\Lambda(s^{-1})
   \label{eq:beta_Ls_1}
 \end{equation}
 (here $L_{s}\otimes1$ and $1\otimes L_{s}$ are regarded as
 diffeomorphisms on $\tilde{M}\times\tilde{M}$). Moreover, the map
 $B\mapsto\beta$ is injective.
\end{thm}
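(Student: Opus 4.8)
The plan is to reduce the theorem to the scalar-valued corollary (the Schwartz kernel theorem as already stated for $L^2(\tilde M)$) by exploiting the fact that $\sL_\Lambda$ is finite-dimensional (Theorem~\ref{thm:dimL}), so that $\sH_\Lambda$ is a finite orthogonal direct sum of copies of a scalar $L^2$-type space. Concretely, I would first fix an orthonormal basis $e_1,\dots,e_d$ of $\sL_\Lambda$, $d=\dim\sL_\Lambda<\infty$, and observe that any $\psi\in\sH_\Lambda$ is determined by its coordinate functions $\psi_k(y)=\langle e_k,\psi(y)\rangle$, which are $\Gamma$-twisted but locally just ordinary functions. The key structural input is relation~(\ref{eq:prod_psi_PhiL}) together with the density of $\Ran(\Phi_\Lambda)$ in $\sH_\Lambda$: the vectors $\Phi_\Lambda(\varphi\otimes v)$, $\varphi\in C_0^\infty(\tilde M)$, $v\in\sL_\Lambda$, span a dense subspace, and by~(\ref{eq:prod_psi_PhiL}) the pairing $\langle\Phi_\Lambda\varphi_1\otimes v_1,\;\cdot\;\rangle$ with any $\psi$ depends on $\varphi_1,v_1$ through the honest integral $\int_{\tilde M}\varphi_1(y)\langle\psi(y),v_1\rangle\,\mathrm d\tilde\mu$. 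This is what lets us treat $B\,\Phi_\Lambda\varphi_2\otimes v_2$ as a genuine function-valued object to which the scalar kernel theorem applies.

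The construction of $\beta$ then proceeds as follows. Given $B\in\sB(\sH_\Lambda)$, for each pair of basis indices $j,k\in\{1,\dots,d\}$ define a bilinear functional on $C_0^\infty(\tilde M)\times C_0^\infty(\tilde M)$ by
\[
  (\varphi_1,\varphi_2)\;\longmapsto\;
  \langle\Phi_\Lambda\,\varphi_1\otimes e_j,\;B\,\Phi_\Lambda\,\varphi_2\otimes e_k\rangle .
\]
Using~(\ref{eq:prod_PhiL_PhiL}) one checks this is separately continuous in the $C_0^\infty$ topology (the sum over $s\in\Gamma$ is locally finite and $B$ is bounded), so Schwartz's theorem — in the form of the stated Corollary, applied componentwise — produces a unique $\beta_{jk}\in\sD'(\tilde M\times\tilde M)$ with $\beta_{jk}(\overline{\varphi_1}\otimes\varphi_2)$ equal to the above pairing when $\varphi_1$ is replaced by $\overline{\varphi_1}$. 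Assembling $\beta=\sum_{j,k}\beta_{jk}\,(e_j^{\,*}\otimes e_k)\in\sD'(\tilde M\times\tilde M)\otimes\sB(\sL_\Lambda)$ gives the desired object, and the defining identity $\langle v_1,\beta(\overline{\varphi_1}\otimes\varphi_2)v_2\rangle=\langle\Phi_\Lambda\varphi_1\otimes v_1,B\,\Phi_\Lambda\varphi_2\otimes v_2\rangle$ follows by bilinearity in $v_1,v_2$. Uniqueness of $\beta$ follows from uniqueness of each $\beta_{jk}$, which in turn follows because $\Ran\Phi_\Lambda$ is dense in $\sH_\Lambda$ (so the pairings on the right determine $B$) combined with the injectivity clause of the scalar Corollary; injectivity of $B\mapsto\beta$ is the same statement read backwards.

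For the equivariance property~(\ref{eq:beta_Ls_1}), I would simply substitute $\varphi_1\mapsto L_s^*\varphi_1$ (respectively $\varphi_2\mapsto L_s^*\varphi_2$) into the defining identity and invoke the intertwining relations~(\ref{eq:PhiLLs}): $\Phi_\Lambda\circ(L_s^*\otimes1)=\Phi_\Lambda\circ(1\otimes\Lambda(s))$. Pulling the $\Lambda(s)$ across the inner product of $\sH_\Lambda$ and re-reading the result as a statement about $\beta$ composed with the diffeomorphism $L_s\otimes1$ (or $1\otimes L_s$) of $\tilde M\times\tilde M$ yields exactly~(\ref{eq:beta_Ls_1}); this is a routine bookkeeping computation with no real content beyond~(\ref{eq:PhiLLs}). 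The main obstacle I anticipate is not any of the algebra but the continuity verification needed to apply Schwartz's theorem: one must confirm that $\varphi\mapsto\Phi_\Lambda\varphi\otimes v$ is continuous from $C_0^\infty(\tilde M)$ into $\sH_\Lambda$ (uniformly on compacta, using properness of the $\Gamma$-action to control the locally finite sum in~(\ref{eq:PhiL_phiv})), so that composing with the bounded operator $B$ and pairing yields a jointly separately-continuous bilinear form on $C_0^\infty\times C_0^\infty$; once that is in place, everything else is formal.
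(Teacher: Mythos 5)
Your proposal is correct and follows essentially the same route as the paper: reduce to the scalar Schwartz kernel theorem by fixing vectors in $\sL_{\Lambda}$, using the estimate $\|\Phi_{\Lambda}\varphi\otimes v\|\leq\sqrt{n_{K}}\,\|\varphi\|\|v\|$ coming from properness of the action, then assemble the operator-valued distribution, prove equivariance from $\Phi_{\Lambda}\circ(L_{s}^{*}\otimes1)=\Phi_{\Lambda}\circ(1\otimes\Lambda(s))$, and get injectivity from density of $\Ran\Phi_{\Lambda}$. The only (inessential) difference is that you assemble $\beta$ from matrix entries in a fixed orthonormal basis, invoking $\dim\sL_{\Lambda}<\infty$, whereas the paper defines $\beta(\overline{\varphi_{1}}\otimes\varphi_{2})$ basis-free via the bounded sesquilinear form $(v_{1},v_{2})\mapsto\beta_{v_{1},v_{2}}(\overline{\varphi_{1}}\otimes\varphi_{2})$, which does not require finite-dimensionality.
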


\begin{proof}
  From (\ref{eq:prod_PhiL_PhiL}) one can see that
  \[
  \|\Phi_{\Lambda}\varphi\otimes v\|^{2}
  \leq \|v\|^{2}\sum_{s\in\Gamma}|
  \langle\varphi,L_{s}^{*}\varphi\rangle|.
  \]
  Let $K\subset\tilde{M}$ be a compact set. Since the action of
  $\Gamma$ on $\tilde{M}$ is proper there exists a number
  $n_{K}\in\mathbb{N}$ depending only on $K$ such that
  \begin{equation}
    \forall\varphi\in C_{0}^{\infty}(\tilde{M})
    \mbox{ s.t. }\supp\varphi\subset K,\forall v\in\sL_{\Lambda},\quad
    \|\Phi_{\Lambda}\varphi\otimes v\|
    \leq \sqrt{n_{K}}\,\|\varphi\|\| v\|
    \label{eq:norm_PhiL_phi_v}
  \end{equation}
  (here $\|\varphi\|$ is the norm of $\varphi$ in $L^{2}(\tilde{M})$).
  This implies that the linear map
  \newline
  $\Phi_{\Lambda}:C_{0}^{\infty}(\tilde{M})\otimes\sL_{\Lambda}
  \to\sH_{\Lambda}$
  is continuous.

  Fix $v_{1},v_{2}\in\sL_{\Lambda}$ and consider the linear map
  \[
  C_{0}^{\infty}(\tilde{M})\to\sD'(\tilde{M}):\varphi_{1}\mapsto f_{1}
  \]
  defined by
  \[
  \forall\varphi_{2}\in C_{0}^{\infty}(\tilde{M}),\,\,
  f_{1}(\varphi_{2}) = \langle\Phi_{\Lambda}\,\varphi_{1}
  \otimes v_{1},B\,\Phi_{\Lambda}\,\varphi_{2}\otimes v_{2}\rangle.
  \]
  Estimate (\ref{eq:norm_PhiL_phi_v}) implies that this linear map is
  continuous. By the Schwartz kernel theorem, there exists
  $\beta_{v_{1},v_{2}}\in\sD'(\tilde{M}\times\tilde{M}$) such that
  $\beta_{v_{1},v_{2}}(\overline{\varphi_{1}}\otimes\varphi_{2})
  =f_{1}(\varphi_{2})$.
  For $\varphi_{1},\varphi_{2}\in C_{0}^{\infty}(\tilde{M})$, the
  expression
  $\beta_{v_{1},v_{2}}(\overline{\varphi_{1}}\otimes\varphi_{2})$ is
  linear in $v_{2}$, anti-linear in $v_{1}$, and one has
  \[
  |\beta_{v_{1},v_{2}}(\overline{\varphi_{1}}\otimes\varphi_{2})|
  \leq C(\varphi_{1},\varphi_{2})\| B\|\| v_{1}\|\| v_{2}\|
  \]
  where $C(\varphi_{1},\varphi_{2})$ depends only on $\varphi_{1}$,
  $\varphi_{2}$. Hence there exists a unique bounded operator
  $\beta(\overline{\varphi_{1}}\otimes\varphi_{2})\in\sB(\sL_{\Lambda})$
  such that
  \[
  \forall v_{1},v_{2}\in\sL_{\Lambda},\quad
  \langle v_{1},\beta(\overline{\varphi_{1}}
  \otimes\varphi_{2})v_{2}\rangle
  = \beta_{v_{1},v_{2}}(\overline{\varphi_{1}}\otimes\varphi_{2}).
  \]
  Moreover, $\beta(\overline{\varphi_{1}}\otimes\varphi_{2})$ depends
  on $\varphi_{1},\varphi_{2}\in C_{0}^{\infty}(\tilde{M})$
  continuously.  This defines an operator-valued distribution
  $\beta\in\sD'(\tilde{M}\times\tilde{M})\otimes\sB(\sL_{\Lambda})$.

  The diffeomorphism $L_{s}$, $s\in\Gamma$, acting on $\tilde{M}$
  preserves the measure $\tilde{\mu}$. By the definition of
  composition of distributions with diffeomorphisms we have, for any
  $f\in\sD'(\tilde{M})$ and $L_{s}$,
  \[
  \forall\varphi\in C_{0}^{\infty}(\tilde{M}),\quad
  f\circ L_{s}(\varphi)=f(L_{s^{-1}}^{*}\varphi).
  \]
  Recalling (\ref{eq:PhiLLs}) we get
  \begin{eqnarray*}
    \langle v_{1},\beta\circ(L_{s}\otimes1)(\overline{\varphi_{1}}
    \otimes\varphi_{2})v_{2}\rangle
    & = & \langle v_{1},\beta(L_{s^{-1}}^{*}
    \overline{\varphi_{1}}\otimes\varphi_{2})v_{2}\rangle\\
    & = & \langle\Phi_{\Lambda}\circ(L_{s^{-1}}^{*}\otimes1)
    \varphi_{1}\otimes v_{1},B\,\Phi_{\Lambda}\,\varphi_{2}
    \otimes v_{2}\rangle\\
    & = & \langle\Phi_{\Lambda}\,\varphi_{1}
    \otimes\Lambda(s^{-1})v_{1},B\,\Phi_{\Lambda}\,\varphi_{2}
    \otimes v_{2}\rangle\\
    & = & \langle v_{1},\Lambda(s)\beta(\overline{\varphi_{1}}
    \otimes\varphi_{2})v_{2}\rangle.
  \end{eqnarray*}
  This verifies the first relation in (\ref{eq:beta_Ls_1}). The second
  relation in (\ref{eq:beta_Ls_1}) can be verified very similarly.

  The injectivity immediately follows from the fact that
  $\Ran\Phi_{\Lambda}$ is dense in $\sH_{\Lambda}$.
\end{proof}

\section{The Schulman's ansatz}

Everywhere in this section $t$ is a real parameter. We deal with
propagators as distributions introduced as kernels of the
corresponding evolution operators. Let
$\mathcal{K}_{t}\in\sD'(\tilde{M}\times\tilde{M})$ be the kernel of
$U(t)\in\sB(L^{2}(\tilde{M}))$, and let
$\mathcal{K}_{t}^{\Lambda}\in\sD'(\tilde{M}\times\tilde{M})
\otimes\sB(\sL_{\Lambda})$
be the kernel of $U_{\Lambda}(t)\in\sB(\sH_{\Lambda})$. Recall that
$\mathcal{K}_{t}^{\Lambda}$ is $\Lambda$-equivariant which means that
\begin{equation}
  \forall s\in\Gamma,\quad\mathcal{K}_{t}^{\Lambda}(s\cdot y_{1},y_{2})
  = \Lambda(s)\mathcal{K}_{t}^{\Lambda}(y_{1},y_{2}),\,
  \mathcal{K}_{t}^{\Lambda}(y_{1},s\cdot y_{2})
  = \mathcal{K}_{t}^{\Lambda}(y_{1},y_{2})\Lambda(s^{-1}).
  \label{eq:KtLambda_equiv}
\end{equation}

For each $\Lambda\in\hat{\Gamma}$ choose an orthonormal basis
$\{u_{n}^{\Lambda}\}$ in $\sL_{\Lambda}$, and let
$\{\phi_{n}^{\Lambda}\}$ be the dual basis in $\sL_{\Lambda}^{*}$.
First we wish to rewrite the Bloch decomposition of the propagator
(\ref{eq:U_int_UL}) in terms of kernels. Note that if
$A\in\sI_{2}(\sL_{\Lambda})\equiv\sL_{\Lambda}^{*}\otimes\sL_{\Lambda}$
then
\[
A = \sum_{n}\phi_{n}^{\Lambda}\otimes Au_{n}^{\Lambda}.
\]
In particular, if $\varphi\in C_{0}^{\infty}(\tilde{M})$ then
\[
\Phi[\varphi](\Lambda)=\sum_{n}\phi_{n}^{\Lambda}
\otimes\left(\Phi_{\Lambda}\,\varphi\otimes u_{n}^{\Lambda}\right)
\in\sL_{\Lambda}^{*}\otimes\sH_{\Lambda}.
\]
With the aid of this relation one derives that
\begin{eqnarray}
  \langle\Phi[\varphi_{1}](\Lambda),
  \left(1\otimes U_{\Lambda}(t)\right)
  \Phi[\varphi_{2}](\Lambda)\rangle
  & = & \sum_{m}\sum_{n}\langle\phi_{m}^{\Lambda},
  \phi_{n}^{\Lambda}\rangle\langle\Phi_{\Lambda}\,
  \varphi_{1}\otimes u_{m}^{\Lambda},U_{\Lambda}(t)\Phi_{\Lambda}\,
  \varphi_{2}\otimes u_{n}^{\Lambda}\rangle\nonumber \\
  & = & \Tr[\mathcal{K}_{t}^{\Lambda}
  (\overline{\varphi_{1}}\otimes\varphi_{2})].
  \label{eq:Phi_Phi_TrKL}
\end{eqnarray}

\begin{lem}
  For all $\varphi_{1},\varphi_{2}\in C_{0}^{\infty}(\tilde{M})$, the
  function
  $\Lambda\mapsto\Tr[\mathcal{K}_{t}^{\Lambda}
  (\overline{\varphi_{1}}\otimes\varphi_{2})]$
  is integrable on $\hat{\Gamma}$.
\end{lem}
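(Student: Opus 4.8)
The plan is to deduce integrability directly from the Bloch decomposition of $U(t)$ together with identity (\ref{eq:Phi_Phi_TrKL}), via two applications of the Cauchy--Schwarz inequality. First I would invoke (\ref{eq:Phi_Phi_TrKL}), which tells us that
\[
\Tr[\mathcal{K}_{t}^{\Lambda}(\overline{\varphi_{1}}\otimes\varphi_{2})]
= \big\langle\Phi[\varphi_{1}](\Lambda),\,
(1\otimes U_{\Lambda}(t))\,\Phi[\varphi_{2}](\Lambda)\big\rangle
\]
for \aall $\Lambda\in\hat{\Gamma}$, the inner product being taken in $\sL_{\Lambda}^{\ast}\otimes\sH_{\Lambda}$. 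Since $C_{0}^{\infty}(\tilde{M})\subset L^{2}(\tilde{M})$ and $\Phi$ is unitary, the vector fields $\Lambda\mapsto\Phi[\varphi_{i}](\Lambda)$, $i=1,2$, are measurable and square-integrable elements of the direct integral $\int_{\hat{\Gamma}}^{\oplus}\sL_{\Lambda}^{\ast}\otimes\sH_{\Lambda}\,\mathrm{d}\hat{m}(\Lambda)$, and $\|\Phi[\varphi_{i}]\|=\|\varphi_{i}\|$ with the $L^{2}(\tilde{M})$ norm on the right.

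Next I would address measurability of the integrand. By (\ref{eq:U_int_UL}) the family $\{1\otimes U_{\Lambda}(t)\}_{\Lambda\in\hat{\Gamma}}$ is the decomposition of the bounded operator $\Phi U(t)\Phi^{-1}$, hence it is a decomposable, in particular measurable, family of operators. Consequently the scalar function $\Lambda\mapsto\langle\Phi[\varphi_{1}](\Lambda),(1\otimes U_{\Lambda}(t))\Phi[\varphi_{2}](\Lambda)\rangle$ is $\hat{m}$-measurable on $\hat{\Gamma}$, and by the displayed identity so is $\Lambda\mapsto\Tr[\mathcal{K}_{t}^{\Lambda}(\overline{\varphi_{1}}\otimes\varphi_{2})]$.

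Finally I would estimate the integral. Each $U_{\Lambda}(t)$ is unitary on $\sH_{\Lambda}$, so $1\otimes U_{\Lambda}(t)$ has operator norm $1$ on $\sL_{\Lambda}^{\ast}\otimes\sH_{\Lambda}$; the Cauchy--Schwarz inequality in that Hilbert space then gives
\[
\big|\Tr[\mathcal{K}_{t}^{\Lambda}(\overline{\varphi_{1}}\otimes\varphi_{2})]\big|
\leq \|\Phi[\varphi_{1}](\Lambda)\|\,\|\Phi[\varphi_{2}](\Lambda)\|
\]
for \aall $\Lambda\in\hat{\Gamma}$. Integrating over $\hat{\Gamma}$ and applying Cauchy--Schwarz once more, this time in $L^{2}(\hat{\Gamma},\mathrm{d}\hat{m})$, yields
\[
\int_{\hat{\Gamma}}\big|\Tr[\mathcal{K}_{t}^{\Lambda}(\overline{\varphi_{1}}\otimes\varphi_{2})]\big|\,\mathrm{d}\hat{m}(\Lambda)
\leq \|\Phi[\varphi_{1}]\|\,\|\Phi[\varphi_{2}]\|
= \|\varphi_{1}\|\,\|\varphi_{2}\| < \infty,
\]
which is the assertion.

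There is no serious analytic difficulty here once the direct-integral picture is in place; the only points worth checking are that the trace in (\ref{eq:Phi_Phi_TrKL}) is a genuine (finite) trace, which is guaranteed by Theorem~\ref{thm:dimL} since $\dim\sL_{\Lambda}$ is bounded on $\hat{\Gamma}$, and that identity (\ref{eq:Phi_Phi_TrKL}) need only hold $\hat{m}$-almost everywhere for the integrability conclusion. Thus the whole proof amounts to combining the unitarity of $\Phi$ with the unitarity of each $U_{\Lambda}(t)$ through a double use of Cauchy--Schwarz.
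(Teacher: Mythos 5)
Your proof is correct and follows essentially the same route as the paper's: the identity (\ref{eq:Phi_Phi_TrKL}) combined with the unitarity of $U_{\Lambda}(t)$ gives the pointwise bound $|\Tr[\mathcal{K}_{t}^{\Lambda}(\overline{\varphi_{1}}\otimes\varphi_{2})]|\leq\|\Phi[\varphi_{1}](\Lambda)\|\,\|\Phi[\varphi_{2}](\Lambda)\|$, and a second Cauchy--Schwarz in $L^{2}(\hat{\Gamma})$ together with the unitarity of $\Phi$ yields the bound $\|\varphi_{1}\|\,\|\varphi_{2}\|$. Your additional remarks on measurability and on the finiteness of the trace are sound and only make explicit what the paper leaves implicit.
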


\begin{proof}
  From (\ref{eq:Phi_Phi_TrKL}) one derives the estimate
  \[
  |\Tr[\mathcal{K}_{t}^{\Lambda}
  (\overline{\varphi_{1}}\otimes\varphi_{2})]|
  \leq \|\Phi[\varphi_{1}](\Lambda)\|\|\Phi[\varphi_{2}](\Lambda)\|
  \]
  whence
  \begin{eqnarray*}
    \int_{\hat{\Gamma}}|\Tr[\mathcal{K}_{t}^{\Lambda}
    (\overline{\varphi_{1}}\otimes\varphi_{2})]|\,
    \mathrm{d}\hat{m}(\Lambda)
    & \leq & \left(\int_{\hat{\Gamma}}
      \|\Phi[\varphi_{1}](\Lambda)\|^{2}\,
      \mathrm{d}\hat{m}(\Lambda)\right)^{\!1/2}
    \left(\int_{\hat{\Gamma}}\|\Phi[\varphi_{2}](\Lambda)\|^{2}\,
      \mathrm{d}\hat{m}(\Lambda)\right)^{\!1/2}\\
    & = & \|\varphi_{1}\|\|\varphi_{2}\|\,<\,\infty.
  \end{eqnarray*}
  This concludes the verification.
\end{proof}

\begin{prop}
  The kernel $\mathcal{K}_{t}$, $t\in\R$, decomposes into a direct
  integral in the following sense:
  \begin{equation}
    \forall\varphi_{1},\varphi_{2}\in C_{0}^{\infty}(\tilde{M}),\quad
    \mathcal{K}_{t}(\varphi_{1}\otimes\varphi_{2})
    = \int_{\hat{\Gamma}}\Tr[\mathcal{K}_{t}^{\Lambda}
    (\varphi_{1}\otimes\varphi_{2})]\,\mathrm{d}\hat{m}(\Lambda).
    \label{eq:rel_K_KL}
  \end{equation}
\end{prop}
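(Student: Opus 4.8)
The plan is to combine the Bloch decomposition of the evolution operator, equation~(\ref{eq:U_int_UL}), with the relation~(\ref{eq:Phi_Phi_TrKL}) between the fiberwise matrix elements and the trace of the fiber kernel $\mathcal{K}_{t}^{\Lambda}$, and then simply unwind the definition of the kernel $\mathcal{K}_{t}$ of $U(t)$ as given by the Corollary to the Schwartz kernel theorem. Concretely, fix $\varphi_{1},\varphi_{2}\in C_{0}^{\infty}(\tilde{M})$. By the defining property of the kernel,
\[
\mathcal{K}_{t}(\varphi_{1}\otimes\varphi_{2})
= \langle\overline{\varphi_{1}},U(t)\varphi_{2}\rangle
\]
(up to the conjugation convention fixed in the Corollary; I would state it so that the left slot carries $\overline{\varphi_{1}}$, matching the notation in~(\ref{eq:rel_K_KL})). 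Applying the unitary $\Phi$ and using~(\ref{eq:U_int_UL}), this equals
\[
\langle\Phi[\overline{\varphi_{1}}],\big(\textstyle\int_{\hat{\Gamma}}^{\oplus}1\otimes U_{\Lambda}(t)\,\mathrm{d}\hat{m}(\Lambda)\big)\Phi[\varphi_{2}]\rangle
= \int_{\hat{\Gamma}}\langle\Phi[\overline{\varphi_{1}}](\Lambda),(1\otimes U_{\Lambda}(t))\Phi[\varphi_{2}](\Lambda)\rangle\,\mathrm{d}\hat{m}(\Lambda).
\]

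Next I would invoke~(\ref{eq:Phi_Phi_TrKL}), which identifies the integrand with $\Tr[\mathcal{K}_{t}^{\Lambda}(\varphi_{1}\otimes\varphi_{2})]$ — here one must be a little careful to feed~(\ref{eq:Phi_Phi_TrKL}) the right arguments: that identity was derived with $\overline{\varphi_{1}}$ in the first slot of the bilinear form, so with the bar bookkeeping it produces exactly $\Tr[\mathcal{K}_{t}^{\Lambda}(\varphi_{1}\otimes\varphi_{2})]$ (note the absence of a bar in~(\ref{eq:rel_K_KL}), consistent with the fact that $\mathcal{K}_{t}$ as a distribution is tested against $\varphi_{1}\otimes\varphi_{2}$ while the kernel-theorem Corollary pairs $\beta$ with $\overline{\varphi_{1}}\otimes\varphi_{2}$). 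Substituting gives precisely the right-hand side of~(\ref{eq:rel_K_KL}), and the preceding Lemma guarantees that the integral on the right converges absolutely, so all manipulations with the direct-integral scalar product are legitimate.

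The only genuine point requiring care — and the step I expect to be the main obstacle — is the interchange of the $\hat{\Gamma}$-integration with the scalar product, i.e.\ justifying that the scalar product in the direct-integral Hilbert space $\int_{\hat{\Gamma}}^{\oplus}\sL_{\Lambda}^{\ast}\otimes\sH_{\Lambda}\,\mathrm{d}\hat{m}(\Lambda)$ really is computed fiberwise as the integral over $\hat{\Gamma}$ of the fiber scalar products for the specific vectors $\Phi[\varphi_{1}](\Lambda)$, $\Phi[\varphi_{2}](\Lambda)$. This is the defining property of the direct integral, but one must check measurability of $\Lambda\mapsto\Phi[\varphi](\Lambda)$ (which follows from the first part of the proof of the Proposition establishing that $\Phi$ is unitary, since $\Phi[\varphi]$ lies in the direct-integral space) and, crucially, absolute integrability of $\Lambda\mapsto\langle\Phi[\varphi_{1}](\Lambda),(1\otimes U_{\Lambda}(t))\Phi[\varphi_{2}](\Lambda)\rangle$; this last is exactly what the Cauchy--Schwarz estimate in the Lemma above delivers, since $\|1\otimes U_{\Lambda}(t)\|=1$. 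Once this is in place the proof is a one-line chain of equalities, so I would keep the write-up short, citing~(\ref{eq:U_int_UL}), (\ref{eq:Phi_Phi_TrKL}), the Corollary to the kernel theorem, and the Lemma.
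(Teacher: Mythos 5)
Your proposal is correct and follows essentially the same route as the paper: unwind the defining relation for $\mathcal{K}_{t}$, insert the unitary $\Phi$ and the decomposition (\ref{eq:U_int_UL}), and identify the fiberwise integrand via (\ref{eq:Phi_Phi_TrKL}), with the preceding Lemma supplying the integrability needed for the direct-integral scalar product. The conjugation bookkeeping you flag is handled in the paper by simply proving the identity with $\overline{\varphi_{1}}$ in the first slot, which is equivalent since $\varphi_{1}$ ranges over all test functions.
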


\begin{proof}
  Applying successively the defining relation for $\mathcal{K}_{t}$,
  equalities (\ref{eq:U_int_UL}) and (\ref{eq:Phi_Phi_TrKL}) one finds
  that
  \begin{eqnarray*}
    \mathcal{K}_{t}(\overline{\varphi_{1}}\otimes\varphi_{2})
    & = & \langle\Phi[\varphi_{1}],\Phi U(t)\Phi^{-1}
    \Phi[\varphi_{2}]\rangle\\
    & = & \int_{\hat{\Gamma}}\langle\Phi[\varphi_{1}](\Lambda),
    \left(1\otimes U_{\Lambda}(t)\right)
    \Phi[\varphi_{2}](\Lambda)\rangle\,\mathrm{d}\hat{m}(\Lambda)\\
    & = & \int_{\hat{\Gamma}}\Tr[\mathcal{K}_{t}^{\Lambda}
    (\overline{\varphi_{1}}\otimes\varphi_{2})]\,
    \mathrm{d}\hat{m}(\Lambda).
  \end{eqnarray*}
  The proof is complete.
\end{proof}

Next we wish to invert relation (\ref{eq:rel_K_KL}). An inverse
relation, which we call here \emph{the Schulman's ansatz}, was derived
in the theoretical physics in the framework of path integration
\cite{schulman1,schulman2} and reads
\[
\mathcal{K}_{t}^{\Lambda}(x,y)
= \sum_{s\in\Gamma}\Lambda(s)\,\mathcal{K}_{t}(s^{-1}\cdot x,y).
\]
Our main goal in the current section is a mathematically rigorous
derivation and interpretation of this formula.

Suppose that $\varphi_{1},\varphi_{2}\in C_{0}^{\infty}(\tilde{M})$
are fixed but otherwise arbitrary. Set
\begin{equation}
  \label{eq:def_Ft}
  F_{t}(s) = \mathcal{K}_{t}\circ(L_{s^{-1}}\otimes1)
  (\varphi_{1}\otimes\varphi_{2})\textrm{~~for~}s\in\Gamma,
\end{equation}
and
\begin{equation}
  \label{eq:def_Gt}
  G_{t}(\Lambda) = \mathcal{K}_{t}^{\Lambda}
  (\varphi_{1}\otimes\varphi_{2})\in\sB(\sL_{\Lambda})
  \textrm{~~for~}\Lambda\in\hat{\Gamma}.
\end{equation} 

\begin{lem}
  \label{thm:Ft_Gt}
  $F_{t}\in L^{2}(\Gamma)$, $G_{t}$ is bounded in the Hilbert-Schmidt
  norm on $\hat{\Gamma}$.
\end{lem}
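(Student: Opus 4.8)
The plan is to establish the two claims separately, relating both $F_t$ and $G_t$ back to the already-constructed unitary $\Phi$. For $F_t$, the key observation is that $F_t(s)$ is, up to complex conjugation on $\varphi_1$, the value $\langle L_{s^{-1}}^{*}\varphi_1, U(t)\varphi_2\rangle$ unwound through the definition of the kernel $\mathcal{K}_t$ and the rule $f\circ L_{s^{-1}}(\varphi) = f(L_s^{*}\varphi)$ for composition of distributions with diffeomorphisms. Hence, after taking care of the conjugation (replacing $\varphi_1$ by $\overline{\varphi_1}$, which is harmless since $C_0^\infty(\tilde M)$ is conjugation-invariant), we get $F_t(s) = \langle \Phi[L_{s^{-1}}^{*}\varphi_1], \Phi U(t)\Phi^{-1}\Phi[\varphi_2]\rangle$. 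The natural route to $L^2(\Gamma)$-membership is then to recognize $\{L_{s^{-1}}^{*}\varphi_1 : s\in\Gamma\}$ as (a reindexing of) the $\Gamma$-orbit whose squared norms sum to $\|\varphi_1\|^2$ in $L^2(\tilde M)$; more precisely, I would use formula (\ref{eq:prod_PhiL_PhiL}) together with the fact that $\Phi_\Lambda\varphi\otimes v$ encodes exactly the sequence $(L_s^{*}\varphi)_{s}$, and then estimate $\sum_s |F_t(s)|^2$ by the operator norm of $U(t)$ (which is $1$) times $\sum_s \|L_{s^{-1}}^{*}\varphi_1\|^2 \cdot \|\varphi_2\|^2$. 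The cleanest formulation: by Cauchy--Schwarz in $L^2(\tilde M)$, $|F_t(s)| \le \|L_{s^{-1}}^{*}\varphi_1\|\,\|\varphi_2\| \cdot \mathbf{1}[s \in S]$ where $S = \{s : \supp(L_s^{*}\varphi_1)\cap\supp\varphi_2 \neq \emptyset\}$ is finite by properness of the $\Gamma$-action; so in fact $F_t$ has finite support and lies in $L^2(\Gamma)$ trivially. That is the honest statement, and it avoids any delicate convergence argument.

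For $G_t$, I would argue that $G_t$ is precisely the Fourier transform of $F_t$, i.e. $\sF[F_t] = G_t$ as elements of $\int_{\hat\Gamma}^{\oplus}\sI_2(\sL_\Lambda)\,\dd\hat m(\Lambda)$. Granting that, the Plancherel identity (\ref{eq:Fourier_transf}) for the unitary $\sF$ immediately gives that $\|G_t(\Lambda)\|_{\mathrm{HS}}$ is an $L^2(\hat\Gamma)$ function, hence ``bounded in the Hilbert--Schmidt norm on $\hat\Gamma$'' in the sense of belonging to the appropriate direct-integral Hilbert space. To prove $\sF[F_t] = G_t$, I would start from the definition (\ref{eq:def_Gt}) of $G_t(\Lambda) = \mathcal{K}_t^\Lambda(\varphi_1\otimes\varphi_2)$, expand using the defining property of the kernel $\mathcal{K}_t^\Lambda$ of $U_\Lambda(t)$ (so $\langle v_1, G_t(\Lambda) v_2\rangle = \langle \Phi_\Lambda\,\overline{\varphi_1}\otimes v_1, U_\Lambda(t)\Phi_\Lambda\,\varphi_2\otimes v_2\rangle$ — again modulo the conjugation bookkeeping), and then insert the explicit formula (\ref{eq:PhiL_phiv}) for $\Phi_\Lambda$. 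Expanding one of the two $\Phi_\Lambda$'s produces a sum $\sum_{s\in\Gamma}\Lambda(s^{-1})\langle L_s^{*}\varphi_1, U(t)\varphi_2\rangle$-type expression (using $\Gamma$-invariance of $H$, hence of $U(t)$, to move $L_s^{*}$ through $U(t)$), which one then recognizes as $\sum_s F_t(s)\Lambda(s)$ after matching indices — and that is exactly $\sF[F_t](\Lambda)$ by the formula $\sF[f](\Lambda) = \sum_s f(s)\Lambda(s)$ valid for $f\in L^1(\Gamma)$, applicable here since $F_t$ has finite support.

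The main obstacle I anticipate is not analytic depth but bookkeeping: carefully tracking (a) the complex conjugation on $\varphi_1$ that appears in the kernel pairings $\beta(\overline{\varphi_1}\otimes\varphi_2)$ versus the unadorned $\varphi_1$ in the definitions (\ref{eq:def_Ft})--(\ref{eq:def_Gt}), and (b) the placement of inverses and left-versus-right actions — the definition of $F_t$ uses $L_{s^{-1}}\otimes 1$ acting on $\tilde M\times\tilde M$ as a diffeomorphism, whose pullback rule introduces a further inversion, and $\Phi_\Lambda$ carries $\Lambda(s^{-1})$ factors, and $\sF$ of $L_s^{*}f$ is $\Lambda(s^{-1})\sF[f]$ by (\ref{eq:FL_LF}). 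Getting all these to cancel so that $\sF[F_t] = G_t$ comes out on the nose (rather than with a stray inversion or a transpose on $\Lambda$) is where I would be most careful. A secondary, purely expository point: one should state clearly in what sense ``$G_t$ is bounded in the Hilbert--Schmidt norm on $\hat\Gamma$'' is meant — since by Theorem~\ref{thm:dimL} the dimensions $\dim\sL_\Lambda$ are uniformly bounded, and since $G_t \in L^2(\hat\Gamma; \sI_2(\sL_\Lambda))$ as just shown, the pointwise HS norm $\|G_t(\Lambda)\|$ is automatically finite $\hat m$-a.e. and square-integrable, which is the statement actually needed in the sequel.
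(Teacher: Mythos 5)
The central claim in your treatment of $F_t$ is false. You assert that $F_t(s)=\langle L_s^{*}\varphi_1,\,U(t)\varphi_2\rangle$ vanishes unless $\supp(L_s^{*}\varphi_1)\cap\supp\varphi_2\neq\emptyset$, so that $F_t$ has finite support. But $U(t)=e^{-itH}$ is not a local operator: $U(t)\varphi_2$ has no reason to be compactly supported (the Schr\"odinger evolution propagates instantaneously), so disjointness of $\supp L_s^{*}\varphi_1$ and $\supp\varphi_2$ gives nothing. Your fallback estimate is no better: $\sum_{s}\|L_{s^{-1}}^{*}\varphi_1\|^2=\sum_{s}\|\varphi_1\|^2$ diverges whenever $\Gamma$ is infinite, since each $L_s$ preserves $\tilde\mu$. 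The idea you are missing is the one the paper uses: by properness one may write $\varphi_1$ as a finite sum of test functions each satisfying $\supp\varphi_1\cap\supp L_s^{*}\varphi_1=\emptyset$ for all $s\neq 1$; for such a $\varphi_1$ the family $\{\|\varphi_1\|^{-1}L_s^{*}\varphi_1\}_{s\in\Gamma}$ is an \emph{orthonormal system} in $L^2(\tilde M)$, and $\sum_s|F_t(s)|^2\leq\|\varphi_1\|^2\,\|U(t)\varphi_2\|^2=\|\varphi_1\|^2\|\varphi_2\|^2$ is just the Bessel inequality. Properness enters through orthogonality of the translates, not through finiteness of the support of $F_t$.

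The $G_t$ half also goes astray, in two ways. First, ``bounded in the Hilbert--Schmidt norm on $\hat\Gamma$'' means $\sup_\Lambda\|G_t(\Lambda)\|<\infty$, not merely membership in the direct-integral $L^2$ space; the Remark following the lemma deduces $\|G_t(\cdot)\|\in L^1(\hat\Gamma)\cap L^2(\hat\Gamma)$ from this uniform bound together with $\hat m(\hat\Gamma)\leq 1$. Second, deriving the bound from the identity $\sF[F_t]=G_t$ inverts the logical order: that identity is precisely the content of the subsequent Proposition and Corollary, whose proof invokes the present lemma to make sense of the Fourier inversion, and your own justification of it (termwise expansion of $\Phi_\Lambda$ plus the formula $\sF[f](\Lambda)=\sum_s f(s)\Lambda(s)$ for $f\in L^1(\Gamma)$) leans on the false finite-support claim. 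The direct argument is the paper's: $\|G_t(\Lambda)\|^2\leq\sum_m\|\Phi_\Lambda\,\varphi_1\otimes u_m^{\Lambda}\|^2\sum_n\|\Phi_\Lambda\,\varphi_2\otimes u_n^{\Lambda}\|^2$, and under the same disjoint-support normalization each factor equals $\dim(\sL_\Lambda)\,\|\varphi_i\|^2$, which is uniformly bounded in $\Lambda$ by Theorem~\ref{thm:dimL}.
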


\begin{proof}
  (i) Since the action of $\Gamma$ is proper one can write any test
  function $\varphi\in C_{0}^{\infty}(\tilde{M})$ as a finite sum,
  $\varphi=\sum_{j=1}^{n}\eta_{j}$, with
  $\eta_{j}\in{}C_{0}^{\infty}(\tilde{M})$, so that
  \[
  \forall j=1,\ldots,n,\forall s\in\Gamma\setminus\{1\},\quad
  \supp\eta_{j}\cap\supp L_{s}^{*}\eta_{j}=\emptyset.
  \]
  This is why one can assume, without loss of generality, that the
  test function $\varphi_{1}$ fulfills
  \begin{equation}
    \forall s\in\Gamma\setminus\{1\},\quad
    \supp\varphi_{1}\cap\supp L_{s}^{*}\varphi_{1} = \emptyset.
    \label{eq:supp_phi1_Ls}
  \end{equation}
  In that case,
  $\{\|\varphi_{1}\|^{-1}\,L_{s}^{*}\varphi_{1}\}_{s\in\Gamma}$ is an
  orthonormal system in $L^{2}(\tilde{M})$. One has
  \[
  F_{t}(s) = \mathcal{K}_{t}(\overline{L_{s}^{*}\varphi_{1}}
  \otimes\varphi_{2})
  = \langle L_{s}^{*}\varphi_{1},U(t)\varphi_{2}\rangle,
  \]
  and so, by the Bessel inequality,
  \[
  \sum_{s\in\Gamma}|F_{t}(s)|^{2} = \|\varphi_{1}\|^{2}
  \sum_{s\in\Gamma}\left|\left\langle\frac{1}{\|\varphi_{1}\|}\,
      L_{s}^{*}\varphi_{1},U(t)\varphi_{2}\right\rangle
  \right|^{2}\leq\|\varphi_{1}\|^{2}\|\varphi_{2}\|^{2}.
  \]

  (ii) Using the defining relation for $\mathcal{K}_{t}^{\Lambda}$ one
  can estimate the Hilbert-Schmidt norm of $G_{t}(\Lambda)$ as follows
  \begin{equation}
    \| G_{t}(\Lambda)\|^{2} = \sum_{m}\sum_{n}
    |\langle u_{m}^{\Lambda},\mathcal{K}_{t}^{\Lambda}
    (\overline{\varphi_{1}}
    \otimes\varphi_{2})u_{n}^{\Lambda}\rangle|^{2}
    \leq \sum_{m}\|\Phi_{\Lambda}\,\varphi_{1}
    \otimes u_{m}^{\Lambda}\|^{2}\sum_{n}\|\Phi_{\Lambda}\,
    \varphi_{2}\otimes u_{n}^{\Lambda}\|^{2}.
    \label{eq:GtL_norm2}
  \end{equation}
  Again, without loss of generality, one can assume that $\varphi_{1}$
  fulfills condition (\ref{eq:supp_phi1_Ls}). Then in the expression
  \[
  \left(\Phi_{\Lambda}\,\varphi_{1}\otimes u_{m}^{\Lambda}\right)(y)
  = \sum_{s\in\Gamma}\varphi_{1}(s^{-1}\cdot
  y)\Lambda(s)u_{m}^{\Lambda}
  \]
  only at most one summand on the RHS does not vanish. It follows that
  \[
  \|\left(\Phi_{\Lambda}\,\varphi_{1}
    \otimes u_{m}^{\Lambda}\right)(y)\|^{2}
  = \sum_{s\in\Gamma}|\varphi_{1}(s^{-1}\cdot y)|^{2}
  \|u_{m}^{\Lambda}\|^{2}
  \]
  and (recalling (\ref{eq:int_Mtild_M}))
  \begin{eqnarray*}
    \|\Phi_{\Lambda}\,\varphi_{1}\otimes u_{m}^{\Lambda}\|^{2}
    & = & \int_{M}\|\left(\Phi_{\Lambda}\,\varphi_{1}
      \otimes u_{m}^{\Lambda}\right)(y)\|^{2}\,\mathrm{d}\mu(x)\\
    & = & \int_{M}\sum_{s\in\Gamma}|\varphi_{1}(s^{-1}\cdot y)|^{2}\,
    \mathrm{d}\mu(x)\,=\,\|\varphi_{1}\|^{2}.
  \end{eqnarray*}
  Hence, if condition (\ref{eq:supp_phi1_Ls}) is true then
  \[
  \sum_{m}\|\Phi_{\Lambda}\,\varphi_{1}\otimes u_{m}^{\Lambda}\|^{2}
  = \dim(\sL_{\Lambda})\,\|\varphi_{1}\|^{2}.
  \]
  In virtue of Theorem~\ref{thm:dimL}, this sum is uniformly bounded
  in $\Lambda$. The other sum in (\ref{eq:GtL_norm2}),
  $\sum_{n}\|\Phi_{\Lambda}\,\varphi_{2}\otimes{}u_{n}^{\Lambda}\|^{2}$,
  can be analogously shown to have the same property.
\end{proof}

\begin{rem}
  Since we know that the total measure $\hat{m}(\hat{\Gamma})$ is
  finite (see (\ref{eq:mGamma_leq1})) Lemma~\ref{thm:Ft_Gt} implies
  that $\|G_{t}(\cdot)\|\in{}L^{1}(\hat{\Gamma})\cap
  L^{2}(\hat{\Gamma})$.
\end{rem}

\begin{prop}
  For all $\varphi_{1},\varphi_{2}\in C_{0}^{\infty}(\tilde{M})$, the
  functions $F_t(s)$ and $G_t(\Lambda)$ defined respectively by
  (\ref{eq:def_Ft}) and (\ref{eq:def_Gt}) satisfy
  \[
  F_{t}=\sF^{-1}[G_{t}].
  \]
\end{prop}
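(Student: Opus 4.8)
The plan is to show that the Fourier transform of $F_t$ equals $G_t$ pointwise (a.e.\ on $\hat\Gamma$), which by unitarity of $\sF$ is equivalent to the claimed identity $F_t=\sF^{-1}[G_t]$. The starting point is to unfold both sides against the orthonormal basis $\{u_n^\Lambda\}$ and its dual $\{\phi_n^\Lambda\}$, so that matrix elements of operators in $\sB(\sL_\Lambda)$ can be compared scalar by scalar. On the one hand, from the defining relation for $\mathcal{K}_t^\Lambda$ as the kernel of $U_\Lambda(t)$ one has, writing $G_t(\Lambda)=\mathcal{K}_t^\Lambda(\varphi_1\otimes\varphi_2)$,
\[
\langle u_m^\Lambda, G_t(\Lambda)\, u_n^\Lambda\rangle
= \langle \Phi_\Lambda\,\overline{\varphi_1}\otimes u_m^\Lambda,\,
U_\Lambda(t)\,\Phi_\Lambda\,\varphi_2\otimes u_n^\Lambda\rangle
\]
(being careful with the complex conjugation convention built into the kernel pairing, as in the Schwartz-kernel corollary and the preceding theorem). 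On the other hand, $\sF[F_t](\Lambda)=\sum_{s\in\Gamma}F_t(s)\Lambda(s)$ provided $F_t\in L^1(\Gamma)$; here Lemma~\ref{thm:Ft_Gt} together with the finiteness $\hat m(\hat\Gamma)\le 1$ tells us $F_t\in L^2(\Gamma)$, and under the type~I hypothesis (bounded $\dim\sL_\Lambda$, Theorem~\ref{thm:dimL}) the matrix-element functions are square integrable, so the Fourier series computations below are legitimate even when $F_t\notin L^1(\Gamma)$; if one wants $L^1$ one may first impose the support condition (\ref{eq:supp_phi1_Ls}) on $\varphi_1$, which as noted loses no generality by the partition-of-unity argument in the proof of Lemma~\ref{thm:Ft_Gt}.

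The computational heart is to expand
\[
F_t(s)=\mathcal{K}_t\circ(L_{s^{-1}}\otimes 1)(\varphi_1\otimes\varphi_2)
= \langle L_s^*\varphi_1,\, U(t)\,\varphi_2\rangle
\]
(using that $L_{s^{-1}}$ preserves $\tilde\mu$, as recorded in the proof of the equivariant kernel theorem), then insert the Bloch decomposition (\ref{eq:U_int_UL}) of $U(t)$ and the relation (\ref{eq:Phi_PhiL}) between $\Phi$ and $\Phi_\Lambda$. Concretely, $\langle L_s^*\varphi_1, U(t)\varphi_2\rangle = \langle \Phi[L_s^*\varphi_1],\,(\int^\oplus 1\otimes U_\Lambda(t))\,\Phi[\varphi_2]\rangle$, and by (\ref{eq:Phi_PhiL}) one has $\Phi[L_s^*\varphi_1](\Lambda) = \sum_n \phi_n^\Lambda\otimes(\Phi_\Lambda\,L_s^*\varphi_1\otimes u_n^\Lambda)$, which by (\ref{eq:PhiLLs}) equals $\sum_n \phi_n^\Lambda\otimes(\Phi_\Lambda\,\varphi_1\otimes\Lambda(s)u_n^\Lambda)$. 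Carrying the inner product through the direct integral and using (\ref{eq:Phi_Phi_TrKL})-style bookkeeping with the dual bases, one arrives at
\[
F_t(s) = \int_{\hat\Gamma}\Tr\!\big[\Lambda(s)^*\,\mathcal{K}_t^\Lambda(\overline{\varphi_1}\otimes\varphi_2)\big]\,\mathrm{d}\hat m(\Lambda)
= \int_{\hat\Gamma}\Tr\!\big[\Lambda(s)^*\,G_t(\Lambda)\big]\,\mathrm{d}\hat m(\Lambda),
\]
which is precisely the inversion formula $F_t=\sF^{-1}[G_t]$ applied to $G_t$ (compare the inversion formula $f(s)=\int_{\hat\Gamma}\Tr[\Lambda(s)^*\hat f(\Lambda)]\,\mathrm{d}\hat m(\Lambda)$ recalled in Section~III).

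The main obstacle I anticipate is justifying the use of the Plancherel/inversion formula: the inversion formula as stated in Section~III requires $f=g\ast h$ with $g,h\in L^1(\Gamma)$, which is not obviously the form of $F_t$. The clean way around this is to argue purely via Plancherel unitarity of $\sF$: I would instead verify the identity in the direct-integral Hilbert space, i.e.\ show $\|\sF[F_t]-G_t\|^2 = \int_{\hat\Gamma}\|\sF[F_t](\Lambda)-G_t(\Lambda)\|_{HS}^2\,\mathrm{d}\hat m(\Lambda)=0$ by testing $\langle \sF[F_t](\Lambda)-G_t(\Lambda),\,A(\Lambda)\rangle$ against Hilbert-Schmidt-valued elements, which reduces (by polarization and density, since matrix units $\phi_m^\Lambda\otimes u_n^\Lambda$ span) to the scalar identity $\langle u_m^\Lambda, \sF[F_t](\Lambda)u_n^\Lambda\rangle = \langle u_m^\Lambda, G_t(\Lambda)u_n^\Lambda\rangle$ for a.e.\ $\Lambda$. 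Establishing that scalar identity is exactly the computation sketched above, now read forward rather than backward: expand $\sF[F_t](\Lambda)=\sum_s F_t(s)\Lambda(s)$ (convergence in $\sI_2(\sL_\Lambda)$ guaranteed by $F_t\in L^2(\Gamma)$ and $\dim\sL_\Lambda<\infty$), substitute $F_t(s)=\langle L_s^*\varphi_1,U(t)\varphi_2\rangle$, and recognize the resulting sum over $s\in\Gamma$ of $\langle\Phi_\Lambda\varphi_1\otimes\Lambda(s)u_m^\Lambda,\dots\rangle$-type terms as the Bloch-decomposed matrix element of $U_\Lambda(t)$, i.e.\ as $\langle u_m^\Lambda,\mathcal{K}_t^\Lambda(\overline{\varphi_1}\otimes\varphi_2)u_n^\Lambda\rangle$. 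The only real care needed is Fubini-type interchange of $\sum_{s\in\Gamma}$ with $\int_{\hat\Gamma}$, which is licensed by the absolute-convergence bounds already proved: $F_t\in L^2(\Gamma)$, $\|G_t(\cdot)\|\in L^1(\hat\Gamma)\cap L^2(\hat\Gamma)$, and the uniform bound $\sum_m\|\Phi_\Lambda\varphi_1\otimes u_m^\Lambda\|^2=\dim(\sL_\Lambda)\|\varphi_1\|^2$ from Lemma~\ref{thm:Ft_Gt}.
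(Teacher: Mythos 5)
Your proof is correct and in substance the same as the paper's: both arguments reduce the claim to the pointwise identity $F_t(s)=\int_{\hat\Gamma}\Tr[\Lambda(s)^{\ast}G_t(\Lambda)]\,\mathrm{d}\hat m(\Lambda)$. The only difference is that the paper obtains this by substituting $L_s^{\ast}\varphi_1$ into the already-proven decomposition (\ref{eq:rel_K_KL}) and invoking the equivariance (\ref{eq:KtLambda_equiv}) of $\mathcal{K}_t^{\Lambda}$, whereas you rerun that proposition's derivation from scratch, using (\ref{eq:PhiLLs}) at the level of $\Phi_\Lambda$ instead; these are the same computation.

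One caution about the workaround you propose for the inversion formula. The expansion $\sF[F_t](\Lambda)=\sum_s F_t(s)\Lambda(s)$ for a \emph{fixed} $\Lambda$ is not licensed by $F_t\in L^2(\Gamma)$ together with $\dim\sL_\Lambda<\infty$ (for $\Gamma=\mathbb{Z}$ this would assert pointwise convergence of an arbitrary $L^2$ Fourier series), and the support condition (\ref{eq:supp_phi1_Ls}) yields only $\ell^2$ via Bessel, not $\ell^1$. But no such workaround is needed: for any $G$ in the direct integral, Plancherel gives $\sF^{-1}[G](s)=\langle\delta_s,\sF^{-1}[G]\rangle=\langle\sF[\delta_s],G\rangle=\int_{\hat\Gamma}\Tr[\Lambda(s)^{\ast}G(\Lambda)]\,\mathrm{d}\hat m(\Lambda)$, since $\sF[\delta_s](\Lambda)=\Lambda(s)$ is square integrable over $\hat\Gamma$ by Theorem~\ref{thm:dimL} and (\ref{eq:mGamma_leq1}). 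This is precisely what the paper's terse appeal to Lemma~\ref{thm:Ft_Gt} amounts to, and it closes your argument without invoking the $f=g\ast h$ form of the inversion theorem at all.
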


\begin{proof}
  Replacing the test function $\varphi_{1}$ in (\ref{eq:rel_K_KL}) by
  $L_{s}^{\ast}\varphi_{1}$, $s\in\Gamma$, and using
  (\ref{eq:KtLambda_equiv}) one arrives at the equality
  \[
  \mathcal{K}_{t}\circ(L_{s^{-1}}\otimes1)
  (\varphi_{1}\otimes\varphi_{2})
  = \int_{\hat{\Gamma}}\Tr[\Lambda(s)^{\ast}\,
  \mathcal{K}_{t}^{\Lambda}(\varphi_{1}\otimes\varphi_{2})]\,
  \mathrm{d}\hat{m}(\Lambda).
  \]
  In virtue of Lemma~\ref{thm:Ft_Gt}, this means exactly that
  $F_{t}(s)=\sF^{-1}[G_{t}](s)$.
\end{proof}

\begin{cor}
  Conversely,
  \begin{equation}
    G_{t}=\sF[F_{t}].
    \label{schulman}
  \end{equation}
\end{cor}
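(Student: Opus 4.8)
The plan is to read off the Corollary directly from unitarity of the Fourier transformation $\sF$ as recalled in Section~III. Recall that $\sF$ is a \emph{unitary}, hence bijective, mapping of $L^{2}(\Gamma)$ onto $\int_{\hat\Gamma}^{\oplus}\sI_{2}(\sL_{\Lambda})\,\mathrm{d}\hat{m}(\Lambda)$, so that $\sF^{-1}$ is a genuine two-sided inverse on these spaces. The previous Proposition has already established the identity $F_{t}=\sF^{-1}[G_{t}]$, so the task reduces to applying $\sF$ to both sides.

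First I would make sure that $G_{t}$ actually lies in the target Hilbert space of $\sF$, i.e.\ that $\int_{\hat\Gamma}\|G_{t}(\Lambda)\|^{2}\,\mathrm{d}\hat{m}(\Lambda)<\infty$. This is immediate from Lemma~\ref{thm:Ft_Gt} together with the remark following it: $\|G_{t}(\cdot)\|$ is bounded on $\hat\Gamma$, and since $\hat{m}(\hat\Gamma)\le1$ by (\ref{eq:mGamma_leq1}) one gets $\|G_{t}(\cdot)\|\in L^{2}(\hat\Gamma)$. Consequently $\sF^{-1}[G_{t}]$ is a well-defined element of $L^{2}(\Gamma)$ — which is exactly the content of the preceding Proposition, namely that it equals $F_{t}$ (this membership is in any case already implicit in the statement of that Proposition).

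Then I would simply apply $\sF$ to the equality $F_{t}=\sF^{-1}[G_{t}]$ and use $\sF\circ\sF^{-1}=\mathrm{id}$ on $\int_{\hat\Gamma}^{\oplus}\sI_{2}(\sL_{\Lambda})\,\mathrm{d}\hat{m}(\Lambda)$, obtaining $\sF[F_{t}]=G_{t}$, i.e.\ (\ref{schulman}). There is essentially no obstacle; the only point needing a word is the membership of $G_{t}$ in the space on which $\sF^{-1}$ acts, disposed of above. Finally, unwinding the definitions (\ref{eq:def_Ft}), (\ref{eq:def_Gt}) and the elementary formula $\sF[f](\Lambda)=\sum_{s\in\Gamma}f(s)\Lambda(s)$ (valid for $\ell^{1}$ data, which applies here in view of Lemma~\ref{thm:Ft_Gt}) turns (\ref{schulman}) into the pointwise ``Schulman ansatz'' $\mathcal{K}_{t}^{\Lambda}(x,y)=\sum_{s\in\Gamma}\Lambda(s)\,\mathcal{K}_{t}(s^{-1}\cdot x,y)$, understood as an identity of $\sB(\sL_{\Lambda})$-valued distributions — the rigorous interpretation promised in the introduction.
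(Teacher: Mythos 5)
Your proof of the corollary itself is correct and is essentially the paper's (implicit) argument: the preceding Proposition gives $F_{t}=\sF^{-1}[G_{t}]$ with both sides in the appropriate $L^{2}$ spaces, and unitarity of $\sF$ lets you apply $\sF$ to both sides. One caveat on your final sentence: Lemma~\ref{thm:Ft_Gt} only establishes $F_{t}\in L^{2}(\Gamma)$, not $F_{t}\in L^{1}(\Gamma)$, so the pointwise formula $\sF[F_t](\Lambda)=\sum_{s\in\Gamma}F_t(s)\Lambda(s)$ is \emph{not} justified by that lemma; this is precisely why the paper says only that rewriting (\ref{schulman}) \emph{formally} gives the Schulman ansatz. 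The rigorous content stops at the identity $G_{t}=\sF[F_{t}]$ of elements of the direct-integral Hilbert space; promoting it to an absolutely convergent sum over $\Gamma$ would require additional information about the decay of $s\mapsto\mathcal{K}_{t}\circ(L_{s^{-1}}\otimes 1)(\varphi_1\otimes\varphi_2)$ that is not available here.
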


Rewriting (\ref{schulman}) formally gives
\[
\mathcal{K}_{t}^{\Lambda}(\varphi_{1}\otimes\varphi_{2})
= \sum_{s\in\Gamma}\Lambda(s)\,\mathcal{K}_{t}\circ(L_{s^{-1}}
\otimes1)(\varphi_{1}\otimes\varphi_{2})
\]
which is nothing but the Schulman's ansatz.

\section*{Acknowledgments}

The authors are indebted to Joachim Asch for critical comments to the
manuscript. The authors wish to acknowledge gratefully partial support
from the following grants: grant No. 201/05/0857 of the Grant Agency
of the Czech Republic (P.\v{S}.) and grant No. LC06002 of the Ministry
of Education of the Czech Republic (P.K.).

\end{document}